\documentclass[a4paper,10pt]{article}
\usepackage[utf8]{inputenc}
\usepackage{amsthm}
\usepackage{amsmath}
\usepackage{authblk}
\usepackage{cite}
\usepackage{amssymb}
\usepackage{hyperref}
\newtheorem{theorem}{Theorem}

\newtheorem{remark}{Remark}
\newtheorem{lemma}{Lemma}
\newtheorem{corollary}{Corollary}
\newtheorem{proposition}{Proposition}

\title{A Generalization of the Borkar-Meyn Theorem for Stochastic Recursive Inclusions}
\author[1]{
Arunselvan Ramaswamy
}
\author[2]{
Shalabh Bhatnagar
}
\affil[1]{\texttt{arunselvan@csa.iisc.ernet.in}}
\affil[2]{\texttt{shalabh@csa.iisc.ernet.in}}
\affil[1,2]{Department of Computer Science and Automation,
Indian Institute of Science,
Bangalore - 560012, India.}
\begin{document}

\maketitle

\begin{abstract}
In this paper the stability theorem of \textbf{Borkar and Meyn} is extended to
 include the case when the mean field is a set-valued map. Two different sets of
 sufficient conditions are presented that guarantee the `stability and 
 convergence' of stochastic recursive inclusions. 
Our work builds on the works
of \textbf{Bena\"{i}m, Hofbauer and Sorin} as well as \textbf{Borkar and Meyn}. 
As a corollary to one of the main theorems, a natural
generalization of the \textit{Borkar and Meyn Theorem} follows.
In addition, the original theorem of \textit{Borkar and Meyn} 
is shown to hold under slightly relaxed assumptions.
As an application to one of the main theorems we discuss 
a solution to the `approximate drift problem'. Finally, we analyze the stochastic
gradient algorithm with ``constant error gradient estimators''
as yet another application of our main result.
\end{abstract}
\section{Introduction}
\label{introduction}
\paragraph{}
Consider the following recursion in $\mathbb{R}^{d}$ ($d \ge 1$):
\begin{equation} \label{eq:basicrecursion}
 x_{n+1}\ =\ x_{n}\ +\ a(n)\ \left[ h(x_{n})\ +\ M_{n+1} \right],\ for\ n \ge 0,\ \text{where}
\end{equation}
\begin{itemize}
 \item[(i)] $h: \mathbb{R}^d \to \mathbb{R}^d$ is a Lipschitz continuous function.
 \item[(ii)] $a(n) > 0$, for all $n$, is the step-size sequence satisfying
 $\sum_{n=0}^\infty a(n) = \infty$ and $\sum_{n=0}^\infty a(n)^2 < \infty$.
 \item[(iii)] $M_n$, $n \ge 1$, is a sequence of martingale difference terms that constitute 
 the noise.
\end{itemize}

The stochastic recursion given by (\ref{eq:basicrecursion})
is often referred to as a \textit{stochastic recursive equation} \textit{(SRE)}.
A powerful method to analyze the limiting behavior of
(\ref{eq:basicrecursion}) is the \textit{ODE
(Ordinary Differential Equation) method}. Here the limiting behavior
of the algorithm is described in terms of the asymptotics of the solution to the \textit{ODE}
\begin{equation}\nonumber
 \dot{x}(t) = h(x(t)).
\end{equation}
This method was introduced by \textbf{Ljung} \cite{Ljung77} in $1977$. For
a detailed exposition on the subject and a survey of results,
the reader is referred to \textbf{Kushner and Yin} \cite{KushnerYin}
as well as \textbf{Borkar} \cite{BorkarBook}.

\paragraph{}
In $1996$,
\textbf{Bena\"{i}m} \cite{Benaim96} showed that the asymptotic behavior of 
a stochastic recursive equation can be studied by analyzing the
asymptotic behavior of the \textit{associated o.d.e.} However no assumptions were made 
on the dynamics of the \textit{o.d.e.} Specifically, he developed
sufficient conditions which guarantee that limit sets of the continuously interpolated stochastic iterates 
are compact, connected, internally chain transitive and invariant
sets of the \textit{associated o.d.e.} The results found in \cite{Benaim96} are 
generalized in \cite{Benaim99}; further studies were made by \textbf{Bena\"{i}m and Hirsch}
in \cite{BenaimHirsch}.
The assumptions made in \cite{Benaim96} are sometimes referred to as the `classical assumptions'.
One of the key assumptions used by Bena\"{i}m
to prove this convergence theorem is the \textit{almost sure} boundedness of the iterates
\textit{i.e.,} \textit{stability} of the iterates.
In $1999$, \textbf{Borkar and Meyn}
\cite{Borkar99} developed sufficient conditions which guarantee both the stability and convergence
of stochastic recursive equations. These assumptions were consistent with those developed in 
\cite{Benaim96}.
In this paper we refer to the main result of \textbf{Borkar and Meyn} 
colloquially as the \textit{Borkar-Meyn Theorem}.
In the same paper \cite{Borkar99}, several
applications to problems from reinforcement learning have also been discussed.
Another set of sufficient conditions for $SRE$'s were developed by \textbf{Andrieu, Moulines and Priouret}
\cite{Andrieu} using global Lyapunov functions that guarantee the stability and convergence
of the iterates.


\paragraph{}
In $2005$, \textbf{Bena\"{i}m, Hofbauer and Sorin} \cite{Benaim05} showed that the dynamical
systems approach can be extended to the situation where
the mean fields are \textit{set-valued}.
The algorithms considered were of the form:
\begin{equation}\label{eq:recursiveinclusion}
  x_{n+1}\ = x_{n}\ +\ a(n) \left[ y_{n}\ +\ M_{n+1} \right],\ for\ n \ge 0,\ \text{where} 
\end{equation}
 
\begin{itemize}
 \item[(i)] $y_n \in h(x_n)$ and $h: \mathbb{R}^d \to \{subsets\ of\ \mathbb{R}^d\}$ is a Marchaud map.
 For the definition of Marchaud maps the reader is referred to section
 ~\ref{definitions}.
 \item[(ii)] $a(n) > 0$, for all $n \ge 0$, is the step-size sequence satisfying
 $\sum_{n=0}^\infty a(n) = \infty$ and $\sum_{n=0}^\infty a(n)^2 < \infty$.
 \item[(iii)] $M_n$, $n \ge 1$, is a sequence of martingale difference terms.
\end{itemize}
            \paragraph{}
            A recursion such as 
(\ref{eq:recursiveinclusion}) is also called \textit{stochastic recursive inclusion} \textit{(SRI)}.
Since
a differential equation can be seen as a special case of a differential
inclusion wherein $h(x)$ is a cardinality one set for all $ \ x \in \mathbb{R}^{d}$,
\textit{SRE} (\ref{eq:basicrecursion}) can be seen as a special case of
\textit{SRI} (\ref{eq:recursiveinclusion}). 
\paragraph{}
The main aim
of this paper is to extend the original \textit{Borkar-Meyn theorem} 
to the case of stochastic recursive inclusions. 
We present two overlapping yet different
sets of assumptions, in Sections~\ref{assumptions} and \ref{AccuMainSec} 
respectively, that guarantee
the stability and convergence of a \textit{SRI} given by (\ref{eq:recursiveinclusion}).
As a consequence of our main results, Theorems \ref{main} and \ref{AccuMain}, we
present a couple of interesting extensions to the original theorem of Borkar and Meyn
in Section~\ref{GenBorkarMeynSec}. Using the frameworks presented herein
we provide a solution to the problem of approximate drift in Section~\ref{ApproximateDrift}.
For more details on the approximate drift problem the reader is referred to Borkar \cite{BorkarBook}. 
In Section~\ref{finaldisc} we discuss the generality, ease of verifiability and
we also try to explain why the assumptions are ``natural'' in some sense.
\paragraph{} 
\textit{Stochastic gradient descent} ($SGD$) is an important method to find minima of (continuously) 
differentiable functions. When implementing the corresponding approximation algorithm
(See (\ref{sgdi}) in Section~\ref{sgdsec}) using gradient estimators,
an error is made at each step in calculating the gradient of the objective function.
Lets call this error the ``approximation error''.
This is the case when using gradient estimators such as
Kiefer-Wolfowitz, simultaneous perturbation
stochastic approximation (SPSA) and smoothed functional (SF) schemes, see \cite{shalabh}.
Suppose the \textit{perturbation parameters} of the aforementioned estimators
are kept \textit{constant}, then the ``approximation error'' is bounded by a constant that
depends on the size of the perturbation parameters.
We call such estimators \textit{constant-error gradient estimators}.
In Section~\ref{sgdsec} we analyze the 
stochastic gradient approximation algorithm that uses a constant-error gradient
estimator.
Using Theorem~\ref{AccuMain} we show that the iterates are stable
and converge to a $\delta$-neighborhood of the minimum set, for a specified $\delta(>0)$. 
Essentially, our framework gives
a threshold $\epsilon(\delta)$ for the ``approximation error'' so that the 
stochastic gradient approximation algorithm is stable and converges to
a $\delta$-neighborhood of the minimum set.
\paragraph{}
\textit{It is worth noting that prior to this paper
one could only claim that an $SGD$ using constant-error gradient
estimators will only converge to some neighborhood of the minimum set with high probability.
On the other hand, our framework
guarantees \textbf{almost sure} convergence to a small neighborhood of the
minimum set.}
\section{Preliminaries and Assumptions}

\subsection{Definitions and Notations} \label{definitions}
\paragraph{}
The definitions and notations used in this paper are similar to
those in Bena\"{i}m et. al. \cite{Benaim05},
Aubin et. al. \cite{Aubin}, \cite{AubinSet} and Borkar \cite{BorkarBook}. In this section, we 
present a few for easy reference.
\paragraph{}
A set-valued map $h: \mathbb{R}^n \to \{subsets\ of\ \mathbb{R}^m$ \} 
is called a \textit{Marchaud map} if it satisfies
the following properties:
\begin{itemize}
 \item[(i)] For each $x$ $\in \mathbb{R}^{n}$, $h(x)$ is convex and compact.
 \item[(ii)] \textit{(point-wise boundedness)} For each $x \in \mathbb{R}^{n}$,  
 $\underset{w \in h(x)}{\sup}$ $\lVert w \rVert$
 $< K \left( 1 + \lVert x \rVert \right)$ for some $K > 0$.
 \item[(iii)] $h$ is an \textit{upper-semicontinuous} map. 
 We say that $h$ is upper-semicontinuous,
  if given sequences $\{ x_{n} \}_{n \ge 1}$ (in $\mathbb{R}^{n}$) and 
  $\{ y_{n} \}_{n \ge 1}$ (in $\mathbb{R}^{m}$)  with
  $x_{n} \to x$, $y_{n} \to y$ and $y_{n} \in h(x_{n})$, $n \ge 1$, 
  implies that $y \in h(x)$. In other words
  the graph of $h$, $\left\{ (x, y) \ : \ y \in h(x),\ x\in \mathbb{R}^{n} \right\}$,
  is closed in $\mathbb{R}^{n} \times \mathbb{R}^{m}$.
\end{itemize}
Let $H$ be a Marchaud map on $\mathbb{R}^d$.
The differential inclusion (DI) given by
\begin{equation} \label{di}
\dot{x} \ \in \ H(x)
\end{equation}
is guaranteed to have at least one solution that is absolutely continuous. 
The reader is referred to \cite{Aubin} for more details.
We say that $\textbf{x} \in \sum$ if $\textbf{x}$ 
is an absolutely continuous map that satisfies (\ref{di}).
The \textit{set-valued semiflow}
$\Phi$ associated with (\ref{di}) is defined on $[0, + \infty) \times \mathbb{R}^d$ as: \\
$\Phi_t(x) = \{\textbf{x}(t) \ | \ \textbf{x} \in \sum , \textbf{x}(0) = x \}$. Let
$B \times M \subset [0, + \infty) \times \mathbb{R}^k$ and define
\begin{equation}\nonumber
 \Phi_B(M) = \underset{t\in B,\ x \in M}{\bigcup} \Phi_t (x).
\end{equation}
\indent
Let $M \subseteq \mathbb{R}^d$, the $\omega-limit$ \textit{set} be defined by
$
 \omega_{\Phi}(M) = \bigcap_{t \ge 0} \ \overline{\Phi_{[t, +\infty)}(M)}.
$
Similarly the \textit{limit set} of a solution $\textbf{x}$ is given by
$L(x) = \bigcap_{t \ge 0} \ \overline{\textbf{x}([t, +\infty))}$.
\\ \indent
$M \subseteq \mathbb{R}^d$ is \textit{invariant} if for every $x \in M$ there exists 
a trajectory, $\textbf{x}$, entirely in $M$
with $\textbf{x}(0) = x$. In other words, $\textbf{x} \in \sum$  with $\textbf{x}(t) \in M$,
for all $t \ge 0$.
\\ \indent
Let $x \in \mathbb{R}^d$ and $A \subseteq \mathbb{R}^d$, then
$d(x, A) : = \inf \{\lVert a- y \rVert \ | \ y \in A\}$. We define the $\delta$-\textit{open neighborhood}
of $A$ by $N^\delta (A) := \{x \ |\ d(x,A) < \delta \}$. The 
$\delta$-\textit{closed neighborhood} of $A$ 
is defined by $\overline{N^\delta} (A) := \{x \ |\ d(x,A) \le \delta \}$.
The open ball of radius $r$ around the origin is represented by $B_r(0)$,
while the closed ball is represented by $\overline{B}_r(0)$.
\\ \indent
\textit{Internally Chain Transitive Set}:
$M \subset \mathbb{R}^{d}$ is said to be
internally chain transitive if $M$ is compact and for every $x, y \in M$,
$\epsilon >0$ and $T > 0$ we have the following: There exist $\Phi^{1}, \ldots, \Phi^{n}$ that
are $n$ solutions to the differential inclusion $\dot{x}(t) \in h(x(t))$,
a sequence $x_1(=x), \ldots, x_{n+1} (=y) \subset M$
and $n$ real numbers 
$t_{1}, t_{2}, \ldots, t_{n}$ greater than $T$ such that: $\Phi^i_{t_{i}}(x_i) \in N^\epsilon(x_{i+1})$ and
$\Phi^{i}_{[0, t_{i}]}(x_i) \subset M$ for $1 \le i \le n$. The sequence $(x_{1}(=x), \ldots, x_{n+1}(=y))$
is called an $(\epsilon, T)$ chain in $M$ from $x$ to $y$.
\\ \indent
$A \subseteq \mathbb{R}^d$ is an \textit{attracting set} if it is compact
and there exists a neighborhood $U$ such that for any $\epsilon > 0$,
$\exists \ T(\epsilon) \ge 0$ with $\Phi_{[T(\epsilon), +\infty)}(U) \subset
N^{\epsilon}(A)$. Such a $U$ is called the \textit{fundamental neighborhood} of $A$. 
In addition to being compact if the \textit{attracting set} is also invariant then
it is called an \textit{attractor}.
The \textit{basin
of attraction } of $A$ is given by $B(A) = \{x \ | \ \omega_\Phi(x) \subset A\}$. It is called
\textit{Lyapunov stable} if for all $\delta > 0$, $\exists \ \epsilon > 0$ such that
$\Phi_{[0, +\infty)}(N^\epsilon(A)) \subseteq N^\delta(A)$.
We use $T(\epsilon)$ and $T_\epsilon$ interchangeably to denote the dependence of $T$ on $\epsilon$.
\\ \indent 
We define the \textit{lower and upper limits} of sequences of sets.
Let $\{K_{n}\}_{n \ge 1}$ be a sequence of sets in $\mathbb{R}^{d}$. 
\begin{enumerate}
 \item The \textit{lower limit} of $\{K_{n} \}_{n \ge 1}$
is given by, 
$Liminf_{n \to \infty} K_{n}$ $:=$ $\{x \ |\ 
 \underset{n \to \infty}{\lim} d(x, K_{n}) = 0 \}$.
 \item The \textit{upper-limit} of $\{K_{n}\}_{n \ge 1}$ is
given by, 
$Limsup_{n \to \infty} K_n$ $:= \ \{y \ | \ 
\underset{n \to \infty}{\underline{lim}}d(y, K_n)= 0 \}$. \\
We may interpret that the lower-limit collects the limit points of  
$\{K_n\}_{n \ge 1}$ while the upper-limit
collects its accumulation points.
\end{enumerate}
\subsection{The assumptions} \label{assumptions}
\paragraph{}
Recall that we have the following recursion in $\mathbb{R}^{d}$:
\begin{equation} \nonumber
 x_{n+1} = x_{n} + a(n) \left[ y_{n} + M_{n+1} \right], \text{ where } y_n \in h(x_n).
\end{equation}
We state our assumptions below:
\begin{itemize}
 \item[\textbf{(A1)}] $h: \mathbb{R}^d \to \{\text{subsets of }\mathbb{R}^d\}$ is a Marchaud map.
 \item[\textbf{(A2)}] $\{ a(n) \}_{n \ge 0}$ is a scalar sequence such that: $a(n) > 0$ $\forall n$, 
 $\underset{n \ge 0}{\sum} a(n) = \infty$
 and 
 $\underset{n \ge 0}{\sum} a(n)^{2} < \infty$. Without loss of generality we let
 $\underset{n}{sup}\ a(n) \le 1$.  
 \item[\textbf{(A3)}] 
 \begin{itemize}
  \item[]$\{ M_{n}\}_{n \ge 1}$ is a martingale difference sequence with respect to
 the filtration \\
 $\mathcal{F}_{n}$ $:=$ $ \sigma \left( x_{0}, M_{1}, \ldots, M_{n} \right) $, 
$n \ge 0$.
\item[(i)] $\{ M_{n}\}_{n \ge 1}$ is a square integrable sequence.
\item[(ii)] $E[\lVert M_{n+1} \rVert ^{2} | \mathcal{F}_{n}]$ $\le$ $K \left( 1 + \lVert x_{n} \rVert ^{2} \right)$, for 
$n \ge 0$ and some
 constant $K > 0$. Without loss of generality assume that the same constant, $K$,
 works for both the point-wise boundedness condition of $(A1)$
 (see condition (ii) in the definition of Marchaud map in Section \ref{definitions}) and $(A3)$.
 \end{itemize}

 \end{itemize}
$\\ $ \indent
 For $c \ge 1$ and $x \in \mathbb{R}^{d}$, define $h_{c}(x) = \{y \ |\ cy \in h(cx) \}$. Further,
for each $x \in \mathbb{R}^d$, define
$h_{\infty}(x) := \overline{\ Liminf_{c \rightarrow \infty } \ h_{c}(x)}$ \textit{i.e.}
the closure of the \textit{lower-limit} of $\{h_c(x)\}_{c \ge 1}$.
$\\ $
 \begin{itemize}
 \item[\textbf{(A4)}] $h_{\infty}(x)$ is non-empty for all $x \in \mathbb{R}^{d}$.
 Further, the differential inclusion $\dot {x}(t) \in h_{\infty}(x(t))$ has an 
 attracting set, $\mathcal{A}$, with $\overline{B}_{1}(0)$ as a subset of its
 fundamental neighborhood. This attracting set is such that $\mathcal{A} \subseteq B_{1}(0)$.
\item[\textbf{(A5)}] Let $c_{n} \ge 1$ be an increasing sequence of integers such that
$c_{n} \uparrow \infty$ as $n \to \infty$. Further, let
$x_{n} \ \rightarrow \ x$ and $y_{n} \ \rightarrow \ y$ as 
$n \ \rightarrow \infty$, such that $y_{n} \in h_{c_{n}}(x_{n})$, $\forall n$, 
then $y \in h_{\infty}(x)$.
\end{itemize}

\paragraph{}
Since the attracting set, $\mathcal{A} \subseteq B_1(0)$, is compact we conclude that
$\underset{x \in \mathcal{A}}{\sup} \lVert x \rVert < 1$. To see this, for all 
$x \in \mathcal{A}$ define $\delta(x) := \underset{y \in \overline{B}_{\epsilon(x)}(x)}{\sup} \lVert y \rVert$,
where $\epsilon(x) > 0$ and $\overline{B}_{\epsilon(x)}(x) \subseteq B_{1}(0)$. For all 
$x \in \mathcal{A}$ we have $\delta(x) < 1$. Further, $\{B_{\epsilon(x)}(x) \ | \ x \in \mathcal{A}\}$
is an open cover of $\mathcal{A}$. Let $\{B_{\epsilon(x_i)}(x_i) \ | \ 1 \le i \le n \}$
be a finite sub-cover and $\delta := \underset{1 \le i \le n}{\max} \delta(x_i)$. Clearly,
it follows that $\underset{x \in \mathcal{A}}{\sup} \lVert x \rVert \le \delta < 1$.
Define $\delta_1 := \underset{x \in \mathcal{A}}{\sup} \lVert x \rVert$ and pick real numbers
$\delta_2,\ \delta_3$ and $\delta_4$ such that $\underset{x \in \mathcal{A}}{\sup} \lVert x \rVert = \delta_1 < \delta_2 < \delta_3 < \delta_4 <1$.
We shall use this sequence later on.
\paragraph{}
Assumptions $(A1) - (A3)$ are the same as in Bena\"{i}m \cite{Benaim05}. However, the assumption
on the stability of the iterates is replaced by $(A4)$ and $(A5)$.
We show that $(A4)$ and $(A5)$ are sufficient conditions to ensure stability of iterates.
We start by observing that $h_c$ and $h_\infty$ are Marchaud maps, where $c \ge 1$.
Further, we show that the constant associated with the point-wise
boundedness property is $K$ of $(A1)$ and $(A3)$.
\begin{proposition}\label{propo}
 $h_{\infty}$ and $h_{c}$, $c \ge 1$, are Marchaud maps.
\end{proposition}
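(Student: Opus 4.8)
The plan is to verify, for a fixed $c\ge 1$ and then for $h_\infty$, the three defining properties of a Marchaud map: non-empty convex compact values, point-wise linear growth with constant $K$, and a closed graph (upper semicontinuity). Throughout, every property is reduced to the corresponding property of $h$.

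For $h_c$ the key observation is that $h_c(x)=\tfrac{1}{c}\,h(cx)$, i.e.\ $h_c(x)$ is the image of the non-empty convex compact set $h(cx)$ under the linear bijection $z\mapsto z/c$; hence $h_c(x)$ is itself non-empty, convex and compact. If $w\in h_c(x)$ then $cw\in h(cx)$, so $\lVert cw\rVert< K(1+\lVert cx\rVert)=K+cK\lVert x\rVert$, and dividing by $c\ge 1$ gives $\lVert w\rVert< K/c+K\lVert x\rVert\le K(1+\lVert x\rVert)$; thus the same constant $K$ works. For the closed graph, if $x_n\to x$, $y_n\to y$ and $y_n\in h_c(x_n)$, then $cx_n\to cx$, $cy_n\to cy$ and $cy_n\in h(cx_n)$, so the closed graph of $h$ forces $cy\in h(cx)$, i.e.\ $y\in h_c(x)$.

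For $h_\infty$ I would first record that a lower limit of a sequence of sets is always closed (immediate from the triangle inequality for $d(\cdot,\cdot)$), so the outer closure in the definition of $h_\infty$ is redundant and $h_\infty(x)=Liminf_{c\to\infty}h_c(x)$; by (A4) this set is non-empty. Convexity: given $y_1,y_2\in h_\infty(x)$ and $\lambda\in[0,1]$, choose for each $c$ nearest points $z_i^c\in h_c(x)$ to $y_i$ (possible since $h_c(x)$ is compact); then $\lambda z_1^c+(1-\lambda)z_2^c\in h_c(x)$ by convexity of $h_c(x)$, and $d(\lambda y_1+(1-\lambda)y_2,\,h_c(x))\le\lambda\,d(y_1,h_c(x))+(1-\lambda)\,d(y_2,h_c(x))\to 0$, so $\lambda y_1+(1-\lambda)y_2\in h_\infty(x)$. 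Growth and compactness: for $y\in h_\infty(x)$ pick $z^c\in h_c(x)$ with $\lVert y-z^c\rVert=d(y,h_c(x))\to 0$; since $\lVert z^c\rVert< K(1+\lVert x\rVert)$ we obtain $\lVert y\rVert\le K(1+\lVert x\rVert)$, so $h_\infty(x)$ is bounded, and being closed it is compact.

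The remaining property — the closed graph of $h_\infty$ — is the main point, and it is precisely what (A5) is designed to supply, via a diagonalization; I expect this and the convexity step to be the only places requiring care. Suppose $x_n\to x$, $y_n\to y$ with $y_n\in h_\infty(x_n)$ for every $n$. Since $\lim_{c\to\infty}d(y_n,h_c(x_n))=0$ for each fixed $n$, one would select a strictly increasing sequence of integers $c_n\uparrow\infty$ together with points $z_n\in h_{c_n}(x_n)$ such that $\lVert y_n-z_n\rVert<1/n$; this only requires, at stage $n$, taking $c_n$ larger than $c_{n-1}$ and large enough that $d(y_n,h_{c_n}(x_n))<1/n$, which is possible since the distance tends to $0$ as $c\to\infty$ along the integers. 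Then $z_n\to y$, $x_n\to x$, $c_n\uparrow\infty$ and $z_n\in h_{c_n}(x_n)$, so (A5) yields $y\in h_\infty(x)$, completing the verification.
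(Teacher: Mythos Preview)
Your proof is correct and follows essentially the same route as the paper: reduce the properties of $h_c$ to those of $h$ via the scaling $h_c(x)=\tfrac{1}{c}h(cx)$, and for $h_\infty$ use the uniform growth bound of the $h_c$'s for compactness, the convexity of each $h_c(x)$ for convexity, and a diagonalization feeding into (A5) for the closed graph. The one (correct) simplification you make over the paper is observing that $Liminf_{c\to\infty}h_c(x)$ is already closed, so the outer closure in the definition of $h_\infty$ is redundant; the paper instead proves convexity and boundedness first for the $Liminf$ and then passes to the closure, which costs a couple of extra lines but is otherwise the same argument.
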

\begin{proof}
Fix $c \ge 1$ and $x \in \mathbb{R}^{d}$. To prove that $h_{c}(x)$ is compact, we show
that it is closed and bounded.
For $n \ge 1$, let $y_{n} \in h_{c}(x)$  and let
$\underset{n \to \infty}{\lim} y_{n} = y$. It follows that $c y_{n} \in h(cx)$ for each $n \ge 1$
and $\underset{n \to \infty}{\lim} cy_{n} = cy$. Since $h(cx)$ is closed,
we have that $cy \in h(cx)$ and 
$y \in h_{c}(x)$.
If we show that $h_c$ is point-wise bounded then we can conclude that $h_c(x)$
is compact. To prove the aforementioned,
let $y \in h_{c}(x)$, then $cy \in h(cx)$. Since $h$ satisfies $(A1)(ii)$, we have that
\begin{equation}\nonumber
 c \lVert y \rVert \le \ K \left( 1+ \lVert cx \rVert \right),\ \text{hence}
 \end{equation}
\begin{equation}\nonumber
 \lVert y \rVert \le \ K \left( \frac{1}{c}+ \lVert x \rVert \right).
\end{equation}
Since $c (\ge 1)$ and $x$ is arbitrarily chosen, $h_{c}$ is point-wise bounded and
the compactness of $h_c(x)$ follows.
The set $h_c(x) = \{ z/c \ |\ z \in h(cx)\}$ is convex since $h(cx)$ is convex
and $h_c(x)$ is obtained by scaling it by $\frac{1}{c}$.
\paragraph{}
Next, we show that $h_{c}(x)$ is upper-semicontinuous. 
Let $\underset{n \to \infty}{\lim}$ $x_{n} = x$, $\underset{n \to \infty}{\lim}$
$y_{n}  = y$ and $y_{n} \in h_c (x_{n})$, $\forall$ $n \ge 1$. 
We need to show that $y \in h_c (x)$.
We have that $c y_{n} \in h(cx_{n})$ for each $n \ge 1$. Since $\underset{n \to \infty}{\lim}$ $cx_{n} = cx$
and $\underset{n \to \infty}{\lim}$ $cy_{n} = cy$, we conclude that $cy \in h(cx)$ since $h$
is assumed to be upper-semicontinuous.
\paragraph{}
It is left to show that $h_{\infty}(x)$, $x \in \mathbb{R}^{d}$
is a Marchaud map. To prove that 
$\lVert z \rVert$ $\le K \left( 1 + \lVert x \rVert \right)$ for all $z \in h_{\infty}(x)$,
it is enough to prove that
$\lVert y \rVert$ $\le K \left( 1 + \lVert x \rVert \right)$ for all 
$y \in Liminf_{c \rightarrow \infty } \ h_{c}(x)$. Fix some
$y \in Liminf_{c \rightarrow \infty } \ h_{c}(x)$ then there exist
$z_{n} \in h_{n}(x)$, $n \ge 1$, such that $\underset{n \to \infty}{\lim}$ $\lVert y - z_{n} \rVert = 0$.
We have that 
\begin{center}
$\lVert y \rVert$ $\le$ $\lVert y - z_{n} \rVert$ + $\lVert z_{n} \rVert$. 
\end{center}
Since  $h_{c}$, $c \ge 1$, is point-wise bounded (the constant associated is independent of $c$
and equals $K$), the above inequality becomes
\begin{center}
$\lVert y \rVert$ $\le$ $\lVert y - z_{n} \rVert$ + $K \left( 1 + \lVert x \rVert \right)$.
\end{center}
Letting $n \to \infty$ in the above inequality, we obtain 
$\lVert y \rVert$ $\le$ $K \left( 1 + \lVert x \rVert \right)$.
Recall that 
$h_{\infty}(x) = \overline{\ Liminf_{c \rightarrow \infty } \ h_{c}(x)}$, 
hence it is compact.
\paragraph{}
Again, to show that $h_{\infty}(x)$ is convex, for each $x \in \mathbb{R}^{d}$, 
we start by proving that $Liminf_{c \rightarrow \infty } \ h_{c}(x)$ is convex.
Let $u, v \in Liminf_{c \rightarrow \infty } \ h_{c}(x)$ and $0 \le t \le 1$. 
We need to show that $tu + (1-t)v \in 
Liminf_{c \rightarrow \infty } \ h_{c}(x)$. 
Consider an arbitrary sequence $\{c_{n}\}_{n\ge 1}$
such that $c_{n} \to \infty$, then there exist $u_{n}, v_{n} \in h_{c_{n}}(x)$
such that $\lVert u_{n} - u \rVert$ and $\lVert v_{n} - v \rVert$ $\to 0$ as 
$c_{n} \to \infty$. Since $h_{c_{n}}(x)$ is convex, it follows that
$tu_{n} + (1-t)v_{n} \in h_{c_{n}}(x)$, further
\begin{equation}\nonumber
 \underset{c_{n} \to \infty}{\lim} \left( \ tu_{n} + (1-t)v_{n} \right) 
 \ =\ tu + (1-t)v.
\end{equation}
Since we started with an arbitrary sequence $c_{n} \to \infty$, it follows that
$tu + (1-t)v \in Liminf_{c \rightarrow \infty } \ h_{c}(x)$. 
Now we can prove that $h_\infty(x)$ is convex.
Let $u, v \in h_{\infty}(x)$. Then
$\exists$ $\{u_n\}_{n \ge 1}$ and $\{v_n\}_{n \ge 1} \subseteq 
Liminf_{c \rightarrow \infty } \ h_{c}(x)$ 
such that $u_n \to u$
and $v_n \to v$ as $n \to \infty$. We need to show that $tu + (1-t)v \in h_{\infty}(x)$,
for $0 \le t \le 1$.
Since $tu_n + (1-t)v_n \in Liminf_{c \rightarrow \infty } \ h_{c}(x)$,
the desired result is obtained by letting $n \to \infty$ in $tu_n + (1-t)v_n$.
\paragraph{}
Finally, we show that $h_{\infty}$ is upper-semicontinuous. Let
$\underset{n \to \infty}{\lim}$ $x_{n}=x$, $\underset{n \to \infty}{\lim}$ $y_{n}=y$
and $y_{n} \in h_{\infty}(x_{n})$, $\forall$ $n \ge 1$. We need to show that 
$y \in h_{\infty}(x)$. 
Since $y_{n} \in h_{\infty}(x_{n})$, $\exists$ 
$z_n \in Liminf_{c \rightarrow \infty } \ h_{c}(x_n)$ such that 
$\lVert z_n - y_n \rVert < \frac{1}{n}$.
Since $z_{n} \in Liminf_{c \rightarrow \infty } \ h_{c}(x_n)$, $n \ge 1$, 
it follows that there exist
$c_{n}$ such that for all $c \ge c_{n}$, $d \left(z_{n}, h_{c}(x_{n}) \right) < \frac{1}{n}$.
In particular, $\exists\ u_{n}$ $\in h_{c_{n}}(x_{n})$ such that
$\lVert z_{n} - u_{n} \rVert$ $< \frac{1}{n}$. 
We choose the sequence $\{c_{n}\}_{n \ge 1}$ such that $c_{n+1} > c_{n}$ for each $n \ge 1$.
We now have the following:
$\underset{n \to \infty}{\lim}$ $u_{n} = y$, $u_{n} \in h_{c_{n}}(x_{n})$ 
$\forall$ $n$ and
$\underset{n \to \infty}{\lim}$ $x_{n} = x$. It follows directly from assumption $(A5)$
that $y \in h_{\infty}(x)$.
\end{proof}
\newpage
\section{Stability and convergence of stochastic recursive inclusions} \label{sec3}
\paragraph{}
We begin by providing a brief outline of our approach to prove the stability
of a $SRI$ under assumptions $(A1)-(A5)$.
First we divide the time line, $[0, \infty)$, approximately
into intervals of length $T$. 
We shall explain later how we choose and fix $T$.
Then we construct a \textit{linearly interpolated trajectory}
from the given stochastic recursive inclusion;
the construction is explained in the next paragraph.
A sequence of `rescaled'
trajectories of length $T$ is constructed as follows: At the beginning of each $T$-length 
interval we observe the trajectory to see if it is outside the unit ball, if so we scale it
back to the boundary of the unit ball. This scaling factor is then used to scale the `rest of
the $T$-length trajectory'.
\\ \indent
To show that the iterates are bounded almost surely we need to show
that the linearly interpolated trajectory does not `run off' to infinity.
To do so we assume that this is not true and show that
there exists a subsequence of the rescaled $T$-length trajectories that has a solution to
$\dot{x}(t) \in h_{\infty}(x(t))$ as a limit point in $C([0,T], \mathbb{R}^d)$.
We choose and fix $T$ such that any solution to $\dot{x}(t) \in h_\infty(x(t))$ with an
initial value inside the unit ball is close to the origin at the end of time $T$. In this paper 
we choose $T = T(\delta_2 - \delta_1) + 1$. We then argue that the linearly interpolated 
trajectory is forced to make arbitrarily large `jumps' within time $T$. The
\textit{Gronwall inequality} is then used to show that this is not possible.
\\ \indent
Once we prove stability of the recursion we invoke
\textit{Theorem 3.6 \& Lemma 3.8} from \textbf{Bena\"{i}m, Hofbauer and Sorin} \cite{Benaim05}
to conclude that the limit set is a closed, connected, internally chain transitive and
invariant set associated with $\dot{x}(t) \in h_\infty(x(t))$.
\paragraph{}
 We construct
the linearly interpolated trajectory $\overline{x}(t)$, for $t \in [0, \infty)$,
from the sequence $\{ x_{n} \}$ as follows: 
Define $t(0) := 0$, $t(n) \ := \ \sum_{i=0}^{n-1} a(i)$. Let
$\overline{x}(t(n)) \ := \ x_{n}$ and for 
$t \ \in \ (t(n), t(n+1))$, let 
\begin{equation}\nonumber
\overline{x}(t) \ := \ 
\left( \frac{t(n+1) - t}{t(n+1) - t(n)} \right)\ \overline{x}(t(n))\ +\ 
\left( \frac{t - t(n)}{t(n+1) - t(n)} \right) \  
\overline{x}(t(n+1)). 
\end{equation}
We define a piecewise constant trajectory using the sequence $\{y_{n}\}_{n \ge 0}$
as follows: $\overline{y(t)} := y_{n}$ for $t \in [t(n), t(n+1))$, $n \ge 0$.
\paragraph{}
We know that the $DI$ given by $\dot{x}(t) \in h_{\infty}(x(t))$ has an
attractor set $\mathcal{A}$ such that 
$\delta_1 := \underset{x \in \mathcal{A}}{\sup} \lVert x \rVert < 1$. Let us fix $T := T(\delta_2 - \delta_1) + 1$, where $T(\delta_2 - \delta_1)$ is as defined in section~\ref{definitions}.
Then, $\lVert x(t) \rVert < \delta_2$, for all $t \ge T(\delta_2 - \delta_1)$, where
$\{x(t): t \in [0, \infty)\}$ is a solution to 
$ \dot{x}(t) \in h_{\infty}(x(t))$ with an initial value inside the unit ball around the origin.
\paragraph{}
Define $T_{0} \ := \ 0$ and $T_{n} \ := \ min\{ t(m) \ : \ t(m) \ge T_{n-1}+T\}$, $n \ge 1$.
Observe that there exists a subsequence $\{m(n)\}_{n \ge 0}$ of $\{n\}$
such that $T_{n} = t(m(n))$ $\forall \ n \ge 0$.
We construct the rescaled trajectory, $\hat{x}(t)$, $t \ge 0$, as follows: Let
$t \in [T_{n}, T_{n+1})$ for some $n \ge 0$,
then
$\hat{x}(t) := \frac{\overline{x}(t)}{r(n)}$, where 
$r(n) = \lVert \overline{x}(T_{n}) \rVert \vee 1$.
Also, let $\hat{x}(T_{n+1}^{-})\ := $  
$\underset{t \uparrow T_{n+1}}{\lim} \hat{x}(t)$, $t \in \left[ T_{n}, T_{n+1} \right)$.
The corresponding `rescaled $y$ iterates' are given by $\hat{y}(t) := 
\frac{\overline{y(t)}}{r(n)}$ and 
the rescaled martingale noise terms by
$\hat{M}_{k+1} := \frac{M_{k+1}}{r(n)}$, $t(k) \in [T_{n}, T_{n+1})$, $n \ge 0$.
\paragraph{}
Consider the recursion at hand, \textit{i.e.,}
\begin{equation}\nonumber
 \overline{x}(t(k+1))\ = \ \overline{x}(t(k)) \ + \ a(k) \left( \overline{y}(t(k))
\ + \ M_{k+1} \right),
\end{equation}
such that $t(k),\ t(k+1)\ 
\in \ \left[ T_{n}, T_{n+1}  \right)$.
Multiplying both sides by $1/r(n)$ we get the rescaled recursion:
\begin{equation}\nonumber
 \hat{x}(t(k+1))\ = \ \hat{x}(t(k)) \ + \ a(k) \left( \hat{y}(t(k))
\ + \ \hat{M}_{k+1} \right).
\end{equation}
Since $\overline{y}(t(k)) \in h 
\left( \overline{x}(t(k)) \right)$, it follows that $\hat{y}(t(k)) \in h_{r(n)}\left( 
\hat{x}(t(k))  \right)$. It is worth noting that
$E \left[ \lVert \hat{M}_{k+1} \rVert ^{2} | \mathcal{F}_{k} \right]$
$\le \ K \left( 1 + \lVert \hat{x}(t(k)) \rVert ^{2} \right)$.
\newpage
\subsection{Characterizing limits, in $C([0,T], \mathbb{R}^d)$,  of the rescaled trajectories}\label{lemmas}
\paragraph{}
We do not provide proofs for the first three lemmas since
they can be found in \textbf{Borkar} \cite{BorkarBook} or \textbf{Bena\"{i}m, Hofbauer and Sorin }
\cite{Benaim05}.
The first two lemmas essentially state that the rescaled martingale noise converges almost surely.
\begin{lemma}
\label{noiseprelim}
 $\underset{t \in [0,T]}{sup}$ $E \lVert \hat{x}(t) \rVert ^{2}$ $< \infty$.
\end{lemma}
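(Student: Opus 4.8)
The plan is a standard second-moment estimate for the rescaled recursion, closed off with a discrete Gronwall inequality, the only real point of care being that every constant produced is independent of the block index $n$. Fix $n \ge 0$ and work on $[T_n, T_{n+1})$. By construction $\hat{x}(T_n) = \overline{x}(T_n)/r(n)$ with $r(n) = \lVert \overline{x}(T_n)\rVert \vee 1$, so $\lVert \hat{x}(T_n)\rVert \le 1$. For grid points $t(k) \in [T_n, T_{n+1})$ the rescaled recursion is $\hat{x}(t(k+1)) = \hat{x}(t(k)) + a(k)\hat{y}(t(k)) + a(k)\hat{M}_{k+1}$ with $\hat{y}(t(k)) \in h_{r(n)}(\hat{x}(t(k)))$. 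By Proposition~\ref{propo}, $h_{r(n)}$ is a Marchaud map whose point-wise boundedness constant equals $K$ (independent of $r(n)$), so $\lVert \hat{y}(t(k))\rVert \le K(1 + \lVert\hat{x}(t(k))\rVert)$; and, as noted at the end of Section~\ref{sec3}, $E[\lVert\hat{M}_{k+1}\rVert^2 \mid \mathcal{F}_k] \le K(1 + \lVert\hat{x}(t(k))\rVert^2)$.

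First I would expand $\lVert\hat{x}(t(k+1))\rVert^2$, condition on $\mathcal{F}_k$, and use that the martingale cross term vanishes to get $E[\lVert\hat{x}(t(k+1))\rVert^2\mid\mathcal{F}_k] \le \lVert\hat{x}(t(k)) + a(k)\hat{y}(t(k))\rVert^2 + a(k)^2 K(1+\lVert\hat{x}(t(k))\rVert^2)$. Bounding the first term via the linear growth of $\lVert\hat{y}(t(k))\rVert$, together with $a(k)^2 \le a(k) \le 1$ (assumption (A2)) and $2\lVert u\rVert\lVert v\rVert \le \lVert u\rVert^2 + \lVert v\rVert^2$, gives, after taking total expectations, a recursion $u_{k+1} \le (1 + C_1 a(k))u_k + C_2 a(k)$, where $u_k := E\lVert\hat{x}(t(k))\rVert^2$ and $C_1, C_2$ depend only on $K$. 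Since $T_{n+1}$ is the first point $t(m) \ge T_n + T$ and $\sup_m a(m) \le 1$, we have $T_{n+1} - T_n < T + 1$, hence $\sum_{j:\ t(j)\in[T_n,T_{n+1})} a(j) < T+1$. Iterating and using $\prod_j (1 + C_1 a(j)) \le \exp(C_1 \sum_j a(j)) \le e^{C_1(T+1)}$ yields $u_k \le (u_{m(n)} + C_2(T+1))\,e^{C_1(T+1)}$, and since $u_{m(n)} = E\lVert\hat{x}(T_n)\rVert^2 \le 1$, this is at most $C := (1 + C_2(T+1))e^{C_1(T+1)}$, a constant depending only on $K$ and $T$, not on $n$ or $k$.

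Finally, for $t\in(t(k),t(k+1))$ inside a block the value $\hat{x}(t)$ is a convex combination of $\hat{x}(t(k))$ and $\hat{x}(t(k+1))$ (the scaling factor $r(n)$ is constant on the block), so convexity of $\lVert\cdot\rVert^2$ gives $E\lVert\hat{x}(t)\rVert^2 \le \max\{u_k, u_{k+1}\} \le C$; the bound passes to $\hat{x}(T_{n+1}^-)$ by letting $t\uparrow T_{n+1}$. Hence $\sup_{t\ge 0} E\lVert\hat{x}(t)\rVert^2 \le C < \infty$, which in particular yields the claim on $[0,T]$.

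I do not expect a genuine obstacle here; the one thing to watch is that no constant secretly depends on $n$, which is exactly why one rescales so that $\lVert\hat{x}(T_n)\rVert\le 1$ and why Proposition~\ref{propo} (a uniform growth constant $K$ for all $h_c$) is invoked. A minor technical point is the $\mathcal{F}_k$-measurability of $\hat{y}(t(k))$, needed for the cross term to vanish; this follows from the standing adaptedness of $\{y_n\}$ in the recursion (consistent with the form of (A3)(ii), which already treats $x_n$ as $\mathcal{F}_n$-measurable).
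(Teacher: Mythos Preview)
Your argument is correct and is precisely the standard second-moment/discrete Gronwall estimate that the paper has in mind: the authors omit the proof and refer to Borkar \cite{BorkarBook} (Chapter~3), where exactly this computation appears. The key points you isolate---$\lVert\hat{x}(T_n)\rVert\le 1$ at the start of each block, the uniform growth constant $K$ for all $h_c$ from Proposition~\ref{propo}, and $\sum_{t(j)\in[T_n,T_{n+1})}a(j)<T+1$---are what make the bound independent of $n$, and you handle them cleanly. One small remark: the lemma is stated over $[0,T]$ but its real purpose (feeding into Lemma~\ref{noiseconvergence}) requires exactly the block-uniform bound $\sup_{t\ge 0}E\lVert\hat{x}(t)\rVert^2<\infty$ that you actually prove, so your slightly stronger conclusion is the right one to record.
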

\begin{lemma}
\label{noiseconvergence}
 The rescaled sequence $\{ \hat{\zeta}_n\}_{n \ge 1}$, 
 where $\hat{\zeta}_{n}$ $= \sum_{k=0}^{n-1} a(k) \hat{M}_{k+1}$,
 is convergent almost surely.
\end{lemma}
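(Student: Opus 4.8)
The plan is to recognize $\{\hat{\zeta}_n\}_{n \ge 1}$ as a martingale with respect to $\{\mathcal{F}_n\}_{n \ge 0}$ that is bounded in $L^2$, and then to invoke the martingale convergence theorem. The one point requiring care is that the rescaling factor appearing in $\hat{M}_{k+1}$ changes from block to block, so adaptedness has to be checked. Recall that $\hat{M}_{k+1} = M_{k+1}/r(n)$ whenever $t(k) \in [T_n, T_{n+1})$, where $r(n) = \lVert \overline{x}(T_n) \rVert \vee 1 = \lVert x_{m(n)} \rVert \vee 1$ and $t(m(n)) = T_n \le t(k)$ forces $m(n) \le k$. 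Hence $r(n)$ is $\mathcal{F}_{m(n)}$-measurable, and therefore $\mathcal{F}_k$-measurable, so that $E[a(k)\hat{M}_{k+1} \mid \mathcal{F}_k] = (a(k)/r(n))\, E[M_{k+1} \mid \mathcal{F}_k] = 0$. Thus $\{\hat{\zeta}_n\}$ is an $\mathbb{R}^d$-valued martingale, and it is square integrable because $E\lVert \hat{M}_{k+1} \rVert^2 \le K\, E(1 + \lVert \hat{x}(t(k)) \rVert^2) < \infty$ by Lemma~\ref{noiseprelim}.

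Next I would bound the second moments uniformly in $n$. By orthogonality of the martingale increments, $E\lVert \hat{\zeta}_n \rVert^2 = \sum_{k=0}^{n-1} a(k)^2\, E\lVert \hat{M}_{k+1} \rVert^2$. Conditioning on $\mathcal{F}_k$ and using the estimate $E[\lVert \hat{M}_{k+1} \rVert^2 \mid \mathcal{F}_k] \le K(1 + \lVert \hat{x}(t(k)) \rVert^2)$ recorded during the construction in Section~\ref{sec3}, together with the uniform bound $\sup_k E\lVert \hat{x}(t(k)) \rVert^2 \le C < \infty$ furnished by Lemma~\ref{noiseprelim} (each block $[T_n, T_{n+1})$ has length close to $T$ and the rescaled trajectory enters it inside the unit ball), one gets $E\lVert \hat{\zeta}_n \rVert^2 \le K(1+C) \sum_{k=0}^{n-1} a(k)^2 \le K(1+C) \sum_{k=0}^{\infty} a(k)^2$, which is finite by $(A2)$. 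Hence $\sup_n E\lVert \hat{\zeta}_n \rVert^2 < \infty$.

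Finally, since $\{\hat{\zeta}_n\}$ is an $L^2$-bounded $\mathbb{R}^d$-valued martingale, the martingale convergence theorem gives that $\hat{\zeta}_n$ converges almost surely (and in $L^2$) to an integrable random variable, which is the claim. The only genuinely delicate ingredient is the uniform-in-$k$ second-moment control of the rescaled iterates $\hat{x}(t(k))$, which is precisely the content of Lemma~\ref{noiseprelim}; once that is granted the argument is the standard square-integrable martingale computation, and this is why the statement is quoted from Borkar~\cite{BorkarBook} and Bena\"{i}m, Hofbauer and Sorin~\cite{Benaim05} rather than reproved here.
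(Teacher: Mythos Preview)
Your argument is correct and is precisely the standard square-integrable martingale computation that the paper defers to \cite{BorkarBook} and \cite{Benaim05}: verify that the block-dependent scaling $r(n)$ is $\mathcal{F}_k$-measurable so that $\{\hat{\zeta}_n\}$ is a genuine martingale, use Lemma~\ref{noiseprelim} together with $(A3)$ to bound $E\lVert \hat{M}_{k+1}\rVert^2$ uniformly, and conclude $L^2$-boundedness via $\sum a(k)^2<\infty$. There is nothing to add; the paper omits the proof for exactly the reason you identify in your last paragraph.
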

The rest of the lemmas are needed to prove the stability theorem, Theorem~\ref{stability}. 
We begin by showing that the rescaled trajectories are bounded almost surely.
\begin{lemma}
 \label{noescape}
 $\underset{t \in [0, \infty)}{sup} \lVert \hat{x}(t) \rVert < \infty$
 $a.s.$
\end{lemma}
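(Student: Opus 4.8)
The plan is to prove Lemma~\ref{noescape} by contradiction: suppose that $\sup_{t \in [0,\infty)} \lVert \hat{x}(t)\rVert = \infty$ on a set of positive probability, and derive a contradiction with the Gronwall inequality on a single $T$-length interval. The key observation is that by construction each rescaled trajectory $\hat x$ restarts at the boundary of the unit ball at the times $T_n$ (more precisely $\lVert \hat x(T_n)\rVert \le 1$, and if the unscaled iterate has left the unit ball then $\lVert \hat x(T_n)\rVert = 1$), so ``escape to infinity'' must manifest as the rescaled trajectory being forced to travel a distance bounded away from zero on infinitely many of the intervals $[T_n, T_{n+1})$.

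First I would make the escape quantitative. Work on the event $\Omega_0$ where the iterates are unbounded. Then infinitely many of the restart ratios $r(n) = \lVert \overline{x}(T_n)\rVert \vee 1$ blow up; I would extract, for a suitable subsequence, the rescaled $T$-length pieces $\hat x|_{[T_n,T_{n+1})}$ viewed as elements of $C([0,T],\mathbb{R}^d)$. Using Lemma~\ref{noiseprelim} and Lemma~\ref{noiseconvergence} (the rescaled martingale increments $\sum a(k)\hat M_{k+1}$ converge a.s., so the noise contribution over each $T$-window vanishes as $n\to\infty$), together with the point-wise boundedness of $h_{r(n)}$ with the uniform constant $K$ from Proposition~\ref{propo}, the rescaled recursion on $[T_n,T_{n+1})$ has uniformly bounded ``drift''; hence the family $\{\hat x|_{[T_n,T_{n+1})}\}$ is equicontinuous and bounded, so by Arzel\`a--Ascoli a subsequence converges in $C([0,T],\mathbb{R}^d)$ to some limit $x^*(\cdot)$.

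Next I would identify the limit $x^*$ as a solution of the limiting differential inclusion $\dot x(t)\in h_\infty(x(t))$ with $\lVert x^*(0)\rVert \le 1$. This is the standard limiting argument: write the rescaled recursion in integral form, use that $\hat y(t(k))\in h_{r(n)}(\hat x(t(k)))$, that $r(n)\to\infty$ along the subsequence, that $\hat x$ converges uniformly, and invoke assumption $(A5)$ (which says that limits of points in $h_{c_n}(x_n)$ with $c_n\uparrow\infty$ land in $h_\infty(x)$) together with the martingale-noise vanishing, exactly as in the usual proof that interpolated stochastic iterates track the mean-field flow. By our choice $T = T(\delta_2 - \delta_1)+1$ and assumption $(A4)$ (the attracting set $\mathcal{A}\subseteq B_1(0)$ with fundamental neighborhood containing $\overline B_1(0)$, and $\sup_{x\in\mathcal A}\lVert x\rVert = \delta_1 < \delta_2$), any solution of $\dot x\in h_\infty(x)$ started inside the closed unit ball satisfies $\lVert x^*(T)\rVert < \delta_2 < 1$. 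Therefore the converging subsequence of rescaled trajectories has $\lVert \hat x(T_{n+1}^-)/1\rVert$-type endpoint values eventually below, say, $\delta_3$.

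The contradiction then comes from comparing this contraction with the amount of scaling forced at the next restart. If along the subsequence $r(n)\to\infty$ while the endpoint of the rescaled piece is near $x^*(T)$ with norm $<\delta_3<1$, then $\lVert \overline x(T_{n+1})\rVert \approx r(n)\lVert x^*(T)\rVert < \delta_3\, r(n)$ is strictly smaller than $r(n)$, yet for the escape to continue the unscaled trajectory must keep growing, which forces $r(n+1) < \delta_3\, r(n) + o(r(n))$ — i.e. the sequence of scale factors contracts geometrically and cannot diverge, contradicting $r(n)\to\infty$. To make ``$\overline x(T_{n+1})\approx r(n)x^*(T)$'' rigorous one must control the discretization error between $\hat x(T_{n+1}^-)$ and the limit $x^*(T)$ uniformly in $n$; this is where the Gronwall inequality enters, bounding the gap between the interpolated rescaled trajectory and the true solution of the inclusion over $[0,T]$ in terms of the (a.s.-vanishing) noise and the step-sizes. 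The main obstacle, and the step I would spend the most care on, is precisely this uniform tracking estimate: ensuring that the approximation of each $T$-window by a genuine $h_\infty$-solution is good enough, uniformly along the escaping subsequence, that the contraction factor stays below $1$ and overwhelms the forced growth — everything else (Arzel\`a--Ascoli, the $(A5)$ limiting step, the noise estimates) is routine given the earlier lemmas and Proposition~\ref{propo}.
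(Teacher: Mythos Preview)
You have misidentified what Lemma~\ref{noescape} is asking for, and as a result your proposal proves a much harder statement using machinery that is neither available nor needed at this point in the paper.

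The lemma concerns the \emph{rescaled} trajectory $\hat{x}(\cdot)$, not the original iterates. By construction, $\lVert \hat{x}(T_n)\rVert \le 1$ at the start of every interval $[T_n,T_{n+1})$, each such interval has length at most $T+1$, the rescaled drift satisfies $\lVert \hat{y}(t(k))\rVert \le K(1+\lVert \hat{x}(t(k))\rVert)$ uniformly in the scaling (Proposition~\ref{propo}), and by Lemma~\ref{noiseconvergence} the rescaled noise increments are uniformly bounded by some $M_\omega$ almost surely. A direct application of the discrete Gronwall inequality on a single interval then yields the explicit sample-path bound
\[
\sup_{t\ge 0}\lVert \hat{x}(t)\rVert \ \le\ (1+M_\omega+(T+1)K)\,e^{K(T+1)} \ =:\ K_\omega,
\]
which is exactly the constant the paper records immediately after the lemma. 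No contradiction argument, no Arzel\`a--Ascoli, no $(A4)$/$(A5)$, and no attractor structure of $h_\infty$ are involved.

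Your outline is essentially the proof of Theorem~\ref{stability} (stability of the \emph{unscaled} iterates), and it is circular here: Lemmas~\ref{difftozero}, \ref{closertoode}, Corollary~\ref{r_0}, and Theorem~\ref{stability} all \emph{use} the bound $K_\omega$ from Lemma~\ref{noescape} as an input. There is also a logical slip in your first step: assuming $\sup_t\lVert \hat{x}(t)\rVert=\infty$ does not imply that $r(n)\to\infty$; it only says that within some $T$-windows the rescaled trajectory, starting inside the unit ball, makes arbitrarily large excursions --- precisely what a single-interval Gronwall bound rules out.
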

As stated earlier we omit the proof of the above stated lemma and establish a 
 couple of notations used later.
 Let $A \ =\ \{ \omega \ | \ \{\hat{\zeta}_{n}(\omega)\}_{n \ge 1} \ converges\}$.
Since $\hat{\zeta}_{n}$, $n \ge 1$, converges on $A$,
there exists $M_{\omega} < \infty$, possibly sample path dependent, such that 
 $\lVert \sum_{l=0}^{k-1} a(m(n)+l) \hat{M}_{m(n)+l+1} \rVert \le M_{w}$, where $M_{\omega}$
 is independent of $n$ and $k$.
 Also, let
 $\underset{t \ge 0}{\sup}\lVert \hat{x}(t) \rVert$ $\le \ K_\omega$, where $K_\omega := \left( 1 + M_{\omega} + (T+1)K \right)
 e^{K(T+1)}$ is also a constant that is sample path dependent. 
\paragraph{}
Let $x^n (t)$, $t \in [0,T]$ be
the solution (upto time $T$) to
$\dot{x}^{n}(t) = \hat{y}(T_{n} + t)$, with the initial condition 
$x^{n}(0) = \hat{x}(T_{n})$. Clearly, we have
\begin{equation}
 x^{n}(t)\ = \ \hat{x}(T_{n}) + \int_{0}^{t} \hat{y}(T_{n}+z) \,dz .
\end{equation}

The following two lemmas
are inspired by ideas from \textbf{Bena\"{i}m, Hofbauer and Sorin} \cite{Benaim05}
as well as \textbf{Borkar} \cite{BorkarBook}. In the lemma that follows we show
that the limit sets of $\{x^n(\cdotp)\ |\ n \ge 0\}$ and
$\{\hat{x}(T_n + \cdotp)\ |\ n \ge 0 \}$ coincide. We seek limits in
$C([0,T], \mathbb{R}^d)$.
\begin{lemma}
\label{difftozero}
 $\underset{n \to \infty}{\lim}$ 
 $\underset{t \in [T_{n}, T_{n}+T]}{sup} \lVert
 x^n (t) - \hat{x}(t) \rVert = 0$ $a.s.$
\end{lemma}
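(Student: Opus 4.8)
The plan is to bound the difference $x^n(t) - \hat{x}(t)$ on the interval $[T_n, T_n + T]$ by a sum of error contributions coming from (a) the martingale noise and (b) the linear-interpolation error, and then to show that each of these vanishes as $n \to \infty$ almost surely. First I would write both trajectories explicitly. For $\hat{x}$, iterating the rescaled recursion over the steps $t(k) \in [T_n, T_{n+1})$ gives, for $t(k)$ in that block, $\hat{x}(t(k)) = \hat{x}(T_n) + \sum_{l} a(m(n)+l)\,\hat{y}(t(m(n)+l)) + \sum_{l} a(m(n)+l)\,\hat{M}_{m(n)+l+1}$, where the sums run over the indices $l$ with $t(m(n)+l) < t(k)$. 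For the piecewise-linear interpolant at a general $t \in (t(k), t(k+1))$ we add the convex-combination correction, whose size is at most $a(k)\,\lVert \hat{y}(t(k)) + \hat{M}_{k+1}\rVert$. Comparing with $x^n(t) = \hat{x}(T_n) + \int_0^t \hat{y}(T_n + z)\,dz$ (translated to the interval $[T_n, T_n+T]$), the "drift" parts nearly cancel because $\int$ of the piecewise-constant $\hat{y}$ against Lebesgue measure equals exactly the Riemann sum $\sum a(k)\hat{y}(t(k))$ over complete steps; only a partial last step of length at most $a(k) \le \sup_n a(n)$ remains.

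So the residual is controlled by three quantities: the rescaled-noise partial sums $\lVert \sum_l a(m(n)+l)\hat{M}_{m(n)+l+1}\rVert$, which on the set $A$ of Lemma~\ref{noescape} tend to $0$ uniformly in the block as $n \to \infty$ (this is the tail of the a.s.-convergent series $\hat{\zeta}_n$, by Lemma~\ref{noiseconvergence}); the interpolation correction, bounded by $a(k)\bigl(\lVert \hat{y}(t(k))\rVert + \lVert \hat{M}_{k+1}\rVert\bigr)$; and the leftover partial integration step, similarly bounded. To bound $\lVert \hat{y}(t(k))\rVert$ I would use that $\hat{y}(t(k)) \in h_{r(n)}(\hat{x}(t(k)))$, which by Proposition~\ref{propo} (the point-wise boundedness constant being $K$, uniformly in the scaling factor) gives $\lVert \hat{y}(t(k))\rVert \le K(1 + \lVert \hat{x}(t(k))\rVert) \le K(1 + K_\omega)$, using the a.s. uniform bound $\sup_{t}\lVert \hat{x}(t)\rVert \le K_\omega$ from the discussion after Lemma~\ref{noescape}. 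For the noise terms $\lVert \hat{M}_{k+1}\rVert$ I would invoke that the a.s. convergence of $\hat{\zeta}_n$ forces $a(k)\hat{M}_{k+1} \to 0$ a.s., so the per-step contribution is negligible. Since $a(n) \to 0$ (from $\sum a(n)^2 < \infty$), the interpolation-plus-leftover error over a single step is $O(a(k))$ and goes to $0$ uniformly as $n \to \infty$; actually one must be slightly careful to take the supremum over all $t$ in the block, but each summand in the telescoping bound is dominated by $\max_{k : t(k) \ge T_n} a(k) \cdot C_\omega$ plus the noise tail, both of which vanish.

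Putting these together, for every $\omega \in A$ and every $\varepsilon > 0$ there is an $N(\omega)$ such that for $n \ge N(\omega)$, $\sup_{t \in [T_n, T_n+T]} \lVert x^n(t) - \hat{x}(t)\rVert < \varepsilon$; since $P(A) = 1$ by Lemma~\ref{noescape}, the claim follows. The main obstacle, and the step deserving the most care, is the bookkeeping for the "partial step at the right end of the block": the integral $\int_0^t \hat{y}(T_n+z)\,dz$ and the discrete sum $\sum a(k)\hat{y}(t(k))$ agree exactly only up to the last incomplete subinterval containing $t$, and one has to verify that the mismatch there, together with the linear-interpolation correction, is genuinely $O(\sup_{k \ge m(n)} a(k))$ with a constant that is sample-path dependent but finite (here $K_\omega$ does the job). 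Everything else is a routine application of the already-established a.s. boundedness and noise-convergence facts.
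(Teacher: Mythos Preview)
Your proposal is correct and follows essentially the same route as the paper: write both trajectories explicitly over a step $[t(m(n)+k), t(m(n)+k+1))$, subtract, and control the residual by tails of the a.s.\ convergent series $\hat{\zeta}_n$. The one simplification you miss is that the drift terms cancel \emph{exactly}, not merely ``nearly'': since $\hat{y}$ is piecewise constant on $[t(k), t(k+1))$, the partial-step contribution $(t - t(m(n)+k))\,\hat{y}(t(m(n)+k))$ appears identically in the integral defining $x^n(t)$ and in the linear-interpolation expansion of $\hat{x}(t)$. Consequently the paper obtains directly
\[
\lVert x^n(t) - \hat{x}(t)\rVert \le \lVert \hat{\zeta}_{m(n)+k} - \hat{\zeta}_{m(n)}\rVert + \lVert \hat{\zeta}_{m(n)+k+1} - \hat{\zeta}_{m(n)+k}\rVert,
\]
with no drift residual at all, so neither the $K_\omega$ bound on $\hat{y}$ nor the $O(a(k))$ bookkeeping is needed. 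Your extra terms are harmless---you are bounding quantities that are in fact zero---but recognizing the exact cancellation shortens the argument to a single appeal to the Cauchy property of $\{\hat{\zeta}_n\}$.
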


\begin{proof}
 Let $t \in \left[ t(m(n) + k), t(m(n) + k+1) \right)$ and $t(m(n)+k+1) \le T_{n+1}$.
 We first assume that \\ $t(m(n)+k+1) < T_{n+1}$. We have the following:
 \begin{multline}\nonumber
  \hat{x}(t) = \left( \frac{t(m(n)+k+1) - t}{a(m(n)+k)} \right) \hat{x}(t(m(n)+k))
  +
  \left( \frac{t - t(m(n)+k)}{a(m(n)+k)} \right) \hat{x}(t(m(n)+k+1)).
 \end{multline}
Substituting for $\hat{x}(t(m(n)+k+1))$ in the above equation we get:
\begin{multline}\nonumber
\hat{x}(t) = \left( \frac{t(m(n)+k+1) - t}{a(m(n)+k)} \right) \hat{x}(t(m(n)+k))
  +
  \left( \frac{t - t(m(n)+k)}{a(m(n)+k)} \right) 
  \\
  \left( \hat{x}(t(m(n)+k)) + a(m(n)+k) \left(
  \hat{y}(t(m(n)+k)) + \hat{M}_{m(n)+k+1} \right) \right),
\end{multline}
hence,
 \[
\hat{x}(t) =  \hat{x}(t(m(n)+k)) + 
  \left( t - t(m(n)+k) \right) \left( \hat{y}(t(m(n)+k)) + \hat{M}_{m(n)+k+1} \right).
 \]
Unfolding $\hat{x}(t(m(n)+k))$ over $k$
we get,
 \begin{multline}\label{eq:lemma41}
  \hat{x}(t) = \hat{x}(T_{n})  +  \sum_{l=0}^{k-1} a(m(n)+l)
 \left(  \hat{y}(t(m(n) + l)) + \hat{M}_{m(n)+l+1}\right) + \\
\left(t-t(m(n)+k) \right)
 \left( \hat{y}(t(m(n)+k)) + \hat{M}_{m(n)+k+1}  \right). 
 \end{multline}
 Now, we consider $x^n (t)$, \textit{i.e.,}
 \begin{equation}\nonumber
 x^{n}(t) = \hat{x}(T_{n}) + \int_0^t \hat{y}(T_{n}+z) \ \,dz .
 \end{equation}
 Splitting the above integral, we get
 \begin{equation}\nonumber
  x^n (t) =  \ \hat{x}(T_{n}) \ +\ \sum_{l=0}^{k-1}
 \int_{t(m(n)+l)}^{t(m(n)+l+1)} \hat{y}(z) \,dz
 + \int_{t(m(n)+k)}^{t} \hat{y}(z) \,dz .
 \end{equation}
 Thus,
\begin{multline}\label{eq:lemma42}
 x^n(t) = \hat{x}(T_{n})  +  \sum_{l=0}^{k-1} a(m(n)+l)
 \hat{y}(t(m(n) + l)) + 
\left(t-t(m(n)+k) \right) \hat{y}(t(m(n)+k)). 
\end{multline}
From (\ref{eq:lemma41}) and (\ref{eq:lemma42}), 
 it follows that
 \[
 \lVert x^{n}(t) - \hat{x}(t) \rVert \le 
  \left\lVert \sum_{l=0}^{k-1} a(m(n)+l) \hat{M}_{m(n)+l+1} \right\rVert
 + \left\lVert \left(t-t(m(n)+k) \right) \hat{M}_{m(n)+k+1} \ \right\rVert ,
 \]
 and hence,
 \begin{equation}\nonumber
\lVert x^n (t) - \hat{x}(t) \rVert \le \lVert 
\hat{\zeta}_{m(n)+k} - \hat{\zeta}_{m(n)} \rVert  + 
\lVert \hat{\zeta}_{m(n)+k+1} - \hat{\zeta}_{m(n)+k} \rVert .
 \end{equation}
If $t(m(n)+k+1) = T_{n+1}$ then in the proof we may replace $\hat{x}(t(m(n)+k+1))$ with
$\hat{x}(T^-_{n+1})$. The arguments remain the same.
Since $\hat{\zeta}_{n }$, $n \ge 1$, converges almost surely, the desired result follows.
 \end{proof}
  \paragraph{}
The sets $\{x^n (t), t \in [0,T] \ |\ n\ge 0 \}$ and $\{\hat{x}(T_{n} + t), t \in [0,T]\ |\ n \ge 0\}$
can be viewed as subsets of $C([0,T], \mathbb{R}^{d})$. It can be shown that 
$\{x^n (t), t\in [0,T] \ |\  n \ge 0\}$ is equi-continuous and point-wise bounded. Thus from
the \textit{Arzela-Ascoli} theorem, $\{x^n (t), t \in [0,T] \ |\ n\ge 0 \}$ 
is relatively compact.
It follows from Lemma~\ref{difftozero} that the set $\{\hat{x}(T_{n} + t), t \in [0,T] \ |\ n \ge 0\}$
is also relatively compact in $C([0,T], \mathbb{R}^{d})$.
\begin{lemma}
\label{closertoode}
 Let $r(n) \uparrow \infty$, then any limit point of $\left\{ \hat{x}(T_{n}+t), t \in [0,T] 
 : n \ge 0 \right\}$ is of the form $x(t) = x(0) +  \int_0^t y(s) \ ds$, where 
 $y: [0,T] \rightarrow \mathbb{R}^{d}$ is a measurable function and
 $y(t) \in h_{\infty}(x(t))$, $t \in [0,T]$.
\end{lemma}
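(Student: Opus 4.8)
The plan is to transfer the question to the curves $x^{n}(\cdot)$: by Lemma~\ref{difftozero}, on the a.s.\ event $A$ the families $\{x^{n}(\cdot):n\ge 0\}$ and $\{\hat x(T_{n}+\cdot):n\ge 0\}$ have the same limit points in $C([0,T],\mathbb{R}^{d})$, so fix a limit point $x(\cdot)$ and a subsequence (not relabelled) along which $x^{n}\to x$ uniformly on $[0,T]$, hence also $\hat x(T_{n}+\cdot)\to x(\cdot)$ uniformly. Recall $x^{n}(t)=\hat x(T_{n})+\int_{0}^{t}\hat y(T_{n}+z)\,dz$ with $\hat x(T_{n})=x^{n}(0)$, and that $\hat y(T_{n}+t)\in h_{r(n)}\!\left(\hat x(t(m(n)+k))\right)$ for the grid point with $t(m(n)+k)\le T_{n}+t<t(m(n)+k+1)$. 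By Proposition~\ref{propo} the point-wise bound for each $h_{r(n)}$ is the fixed constant $K$, so together with $\sup_{t}\lVert\hat x(t)\rVert\le K_{\omega}$ we get $\lVert\hat y(T_{n}+t)\rVert\le K(1+K_{\omega})$ for all $n$ and all $t\in[0,T]$; in particular $\{\hat y(T_{n}+\cdot)\}_{n}$ is a bounded subset of the Hilbert space $L^{2}([0,T],\mathbb{R}^{d})$.

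Pass to a further subsequence along which $\hat y(T_{n}+\cdot)\rightharpoonup y(\cdot)$ weakly in $L^{2}([0,T],\mathbb{R}^{d})$. Testing this weak convergence componentwise against $\mathbf 1_{[0,t]}$ gives $\int_{0}^{t}\hat y(T_{n}+z)\,dz\to\int_{0}^{t}y(z)\,dz$; combined with $\hat x(T_{n})\to x(0)$ and $x^{n}\to x$ this yields $x(t)=x(0)+\int_{0}^{t}y(z)\,dz$ for all $t\in[0,T]$, so $x$ is absolutely continuous with measurable derivative $y$. It remains to show $y(t)\in h_{\infty}(x(t))$ for a.e.\ $t$.

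The key pointwise fact I would establish is: for every $t\in[0,T)$ and every $\delta>0$, $\hat y(T_{n}+t)\in N^{\delta}(h_{\infty}(x(t)))$ for all $n$ large. Indeed, since $a(m(n)+k)\to 0$, the grid point satisfies $t(m(n)+k)-T_{n}\to t$, so $z_{n}:=\hat x(t(m(n)+k))\to x(t)$ by uniform convergence of $\hat x(T_{n}+\cdot)$ to the continuous function $x$; moreover $\hat y(T_{n}+t)\in h_{r(n)}(z_{n})$ and $\{\hat y(T_{n}+t)\}_{n}$ is bounded. Were the fact false, we could extract a subsequence with $\hat y(T_{n_{j}}+t)\to w$, $d(w,h_{\infty}(x(t)))\ge\delta$, while $z_{n_{j}}\to x(t)$, $r(n_{j})\uparrow\infty$ and $\hat y(T_{n_{j}}+t)\in h_{r(n_{j})}(z_{n_{j}})$; assumption $(A5)$, applied along this subsequence, then forces $w\in h_{\infty}(x(t))$ --- a contradiction. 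Now apply Mazur's lemma to $\hat y(T_{n}+\cdot)\rightharpoonup y$: there are convex combinations $g_{k}=\sum_{i\ge k}\alpha^{(k)}_{i}\,\hat y(T_{i}+\cdot)$ converging to $y$ strongly in $L^{2}$, hence a.e.\ along a subsequence. For such a $t$ and any $\delta>0$, the pointwise fact plus convexity of $N^{\delta}(h_{\infty}(x(t)))$ (a $\delta$-neighbourhood of the convex set $h_{\infty}(x(t))$) give $g_{k}(t)\in N^{\delta}(h_{\infty}(x(t)))$ for $k$ large, hence $y(t)\in\overline{N^{\delta}}(h_{\infty}(x(t)))$; letting $\delta\downarrow 0$ and using that $h_{\infty}(x(t))$ is closed (Proposition~\ref{propo}) gives $y(t)\in h_{\infty}(x(t))$ for a.e.\ $t$. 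Since every limit point of $\{\hat x(T_{n}+\cdot)\}$ arises as such an $x(\cdot)$, the lemma follows.

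I expect the main obstacle to be this pointwise fact together with its interface with $(A5)$: one must track precisely which grid point indexes the argument of $h_{r(n)}$, confirm that this argument converges to $x(t)$, and invoke $(A5)$ along the correct increasing, divergent subsequence of scaling factors; stitching that pointwise statement together with the weak-$L^{2}$/Mazur machinery and the double limit in $\delta$ is where essentially all the bookkeeping lives.
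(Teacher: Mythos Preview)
Your proposal is correct and follows essentially the same route as the paper: bound $\hat y(T_n+\cdot)$ uniformly, extract a weak $L^2$ limit $y$, pass to the limit in the integral identity via Lemma~\ref{difftozero}, then use $(A5)$ to get the pointwise fact $d(\hat y(T_n+t),h_\infty(x(t)))\to 0$ and combine this with convexity of $h_\infty(x(t))$ and a weak-to-strong upgrade to conclude $y(t)\in h_\infty(x(t))$ a.e. The only cosmetic difference is that the paper invokes Banach--Saks (Ces\`aro averages along a subsequence converge strongly) where you invoke Mazur's lemma; both serve the identical purpose here, and your version of the $(A5)$ step is in fact slightly more explicit about tracking the grid point $z_n=\hat x(t(m(n)+k))$.
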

\begin{proof}
We define the notation $[t] := max\{t(k) \ |\ t(k) \le t\}$, $t \ge 0$.
Let $t \in [T_{n}, T_{n+1})$, then $\hat{y}(t) \in h_{r(n)}(\hat{x}([t]))$
and $\lVert \hat{y}(t) \rVert$ $\le $ $K \left( 1 +  \lVert \hat{x}([t]) \rVert\right)$
since $h_{r(n)}$ is a Marchaud map ($K$ is the constant associated with
the point-wise boundedness property). It follows from Lemma~\ref{noescape}
that $\underset{t \in [0, \infty)}{sup}$ $\lVert \hat{y}(t) \rVert < \infty$ $a.s.$
 Using observations made earlier, we can deduce that there exists a sub-sequence of 
 $\mathbb{N}$, say
 $\{ l\} \subseteq \{n\}$, such that $\hat{x}(T_{l}+t) \to x(t)$ in $C \left([0,T], 
 \mathbb{R}^{d} \right)$ and $\hat{y}(m(l)+ \cdotp) \to y(\cdotp)$ weakly in $L
 _{2} \left( [0,T], \mathbb{R}^{d} \right)$. From Lemma~\ref{difftozero}
 it follows that
 $x^{l}(\cdotp) \to x(\cdotp)$ in $C \left( [0,T], \mathbb{R}^{d} \right)$. Letting $r(l) 
 \uparrow \infty$ in
 \begin{equation}\nonumber
 x^{l}(t) = x^{l}(0) + \int_{0}^{t} \hat{y}(t(m(l) + z)) \,dz , \ t \ \in \ [0, T],
 \end{equation}
 we get $x(t) = x(0) + \int_{0}^{t} y(z) dz$ for $t \in [0,T]$. 
Since $\lVert \hat{x}(T_{n})
 \rVert \ \le \ 1$ we have $\lVert x(0) \rVert \ \le \ 1$.
\paragraph{}
 Since $\hat{y}(T_{l}+ \ \cdotp) \to y(\cdotp)$ weakly in 
 $L_{2} \left( [0,T], \mathbb{R}^{d} \right)$,
 there exists $\{l(k)\} \subseteq \{l\}$
 such that \[\frac{1}{N} \sum_{k=1}^{N} \hat{y}(T_{l(k)}+ \ \cdotp) \to y(\cdotp)
 \text{ strongly in }L_{2} \left( [0,T], \mathbb{R}^{d} \right).\]
  Further, there exists
 $\{N(m)\} \subseteq \{N\}$ such that 
 \[\frac{1}{N(m)} \sum_{k=1}^{N(m)} \hat{y}(T_{l(k)}+ \ \cdotp) \to y(\cdotp) \text{ \textit{a.e.} on } 
 [0,T].\]
 \paragraph{}
 Let us fix $t_0 \in$ $\{ t \ |\ \frac{1}{N(m)} \sum_{k=1}^{N(m)} \hat{y}(T_{l(k)}+ \ t) \to y(t),\ t \in [0,T] \}$,
 then
 \begin{equation}\nonumber
  \lim_{N(m) \to \infty} \frac{1}{N(m)} \sum_{k=1}^{N(m)} \hat{y}(T_{l(k)}+ \ t_0) = y(t_0).
 \end{equation}
Since $h_{\infty}(x(t_0))$ is convex and compact (Proposition~\ref{propo}), 
to show that $y(t_0) \in h_{\infty}(x(t_0))$ it is enough to prove that
$
\underset{l(k) \to \infty}{\lim} d\left(\hat{y}(T_{l(k)} +  t_0), h_{\infty}(x(t_0))\right) = 0.
$
If not, $\exists$ $\epsilon > 0$ and $\{n(k)\} \subseteq \{l(k)\}$ such that
$d\left(\hat{y}(T_{n(k)} +  t_0), h_{\infty}(x(t_0))\right) > \epsilon$. Since
$\{\hat{y}(T_{n(k)} +  t_0)\}_{k \ge 1}$ is norm bounded, it follows that there is 
a convergent sub-sequence. For the sake of convenience we assume that
$\underset{k \to \infty}{\lim}$ $ \hat{y}(T_{n(k)} +  t_0)  = y$, for some
$y \in \mathbb{R}^{d}$. Since $\hat{y}(T_{n(k)} +  t_0) \in h_{r(n(k))}(\hat{x}([T_{n(k)} +  t_0]))$
and $\underset{k \to \infty}{\lim}$ $ \hat{x}([T_{n(k)} +  t_0])  = x(t_0)$, it
follows from assumption $(A5)$ that $y \in h_{\infty}(x(t_0))$. 
This leads to a contradiction.
 \end{proof}
 \paragraph{}
 Note that in the statement of Lemma~\ref{closertoode} 
 we can replace `$r(n) \uparrow \infty$' by `$r(l) \uparrow \infty$',
 where $\{ r(l)) \}$ is a subsequence of $\{ r(n) \}$.
 Specifically we can conclude that
 any limit point of $\left\{ \hat{x}(T_{k}+t), t \in [0,T] \right\} _{
  \{k\} \subseteq \{ n \} }$ in $C([0,T], \mathbb{R}^d)$, conditioned on $r(k) \uparrow \infty$, 
  is of the form $x(t) = x(0) +  \int_0^t y(z) \,dz$, where $y(t) \in
 h_{\infty}(x(t))$ for $t \in [0,T]$. It should be noted that $y(\cdotp)$
 may be sample path dependent. The following is an immediate consequence
 of Lemma~\ref{closertoode}.
 \begin{corollary} \label{r_0}
  $\exists \ 1 < R_0 < \infty$ 
such that $\forall \ r(l) > R_0$
$\lVert \hat{x}(T_l + \cdotp) - x(\cdotp) \rVert < \delta_3 - \delta_2$,
where $\{ l \} \subseteq \mathbb{N}$ and $x(\cdotp)$ is a solution (up to time $T$) of $\dot{x}(t) \in h_\infty(x(t))$
such that $\lVert x(0) \rVert \le 1$. The form of $x(\cdotp)$ is as given by
Lemma~\ref{closertoode}.
 \end{corollary}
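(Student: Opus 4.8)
The plan is to argue by contradiction, exactly in the spirit of Lemma~\ref{closertoode}. Suppose the claimed $R_0$ does not exist. Then for every integer $j \ge 1$ there is an index $l_j \in \mathbb{N}$ with $r(l_j) > j$ (so in particular $r(l_j) \uparrow \infty$ along $\{l_j\}$) and yet $\lVert \hat{x}(T_{l_j} + \cdotp) - x_j(\cdotp) \rVert \ge \delta_3 - \delta_2$ in $C([0,T],\mathbb{R}^d)$ for \emph{every} solution $x_j(\cdotp)$ of $\dot{x}(t) \in h_\infty(x(t))$ with $\lVert x_j(0)\rVert \le 1$; equivalently, $\hat{x}(T_{l_j} + \cdotp)$ stays at distance at least $\delta_3-\delta_2$ from the whole set of such solutions. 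By Lemma~\ref{noescape} together with the equicontinuity/pointwise-boundedness observation recorded after Lemma~\ref{difftozero} (Arzel\`a--Ascoli), the family $\{\hat{x}(T_{l_j}+\cdotp)\}_{j \ge 1}$ is relatively compact in $C([0,T],\mathbb{R}^d)$, so it has a convergent subsequence. Along that subsequence $r(\cdot) \uparrow \infty$ still holds, so Lemma~\ref{closertoode} (in the subsequential form noted in the paragraph following its proof) applies: the limit is of the form $x^\ast(t) = x^\ast(0) + \int_0^t y(z)\,dz$ with $y(t) \in h_\infty(x^\ast(t))$ and $\lVert x^\ast(0)\rVert \le 1$, i.e.\ $x^\ast(\cdotp)$ is exactly an admissible solution of the limiting DI. But then $\hat{x}(T_{l_j}+\cdotp)$ converges to $x^\ast(\cdotp)$ in $C([0,T],\mathbb{R}^d)$, contradicting the bound $\lVert \hat{x}(T_{l_j}+\cdotp) - x^\ast(\cdotp)\rVert \ge \delta_3 - \delta_2 > 0$ for all large $j$. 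Hence such an $R_0$ exists, and since $r(l) \ge 1$ always and we may always enlarge $R_0$, we may take $1 < R_0 < \infty$.

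The only subtlety to handle carefully is the quantifier structure: the corollary asserts a \emph{single} threshold $R_0$ that works simultaneously for every subsequence $\{l\}$ and the bound is against \emph{some} solution $x(\cdotp)$ (the one produced by Lemma~\ref{closertoode} for that subsequence). The contradiction hypothesis must therefore be set up as ``for each $j$ there is an index $l_j$ with $r(l_j) > j$ such that $\hat{x}(T_{l_j}+\cdotp)$ is $(\delta_3-\delta_2)$-far from the set $\Sigma_{\le 1} := \{\mathbf{x}\in\Sigma : \lVert\mathbf{x}(0)\rVert\le 1\}$ of admissible solutions'' — this is the correct negation, and it is precisely what Lemma~\ref{closertoode} contradicts, because that lemma shows every limit point of $\hat{x}(T_{l_j}+\cdotp)$ lies in $\Sigma_{\le 1}$. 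I would also remark that the distance to $\Sigma_{\le 1}$ is well-defined because, by the same Arzel\`a--Ascoli argument applied to solutions of $\dot x \in h_\infty(x)$ with initial condition in $\overline{B}_1(0)$ (using the Marchaud / point-wise boundedness of $h_\infty$ from Proposition~\ref{propo}), $\Sigma_{\le 1}$ restricted to $[0,T]$ is a compact subset of $C([0,T],\mathbb{R}^d)$, so the distance is attained by some solution $x(\cdotp)$.

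I expect the main obstacle — really the only non-bookkeeping point — to be making sure the passage to a common subsequence is legitimate: we first pass to a subsequence of $\{l_j\}$ along which $\hat{x}(T_{l_j}+\cdotp)$ converges in $C([0,T],\mathbb{R}^d)$, and we need $r(\cdot)$ to still tend to infinity along it, which is automatic since $r(l_j) > j \to \infty$ forces $r \uparrow \infty$ along \emph{every} further subsequence. After that, Lemma~\ref{closertoode} is a black box and the contradiction is immediate. Everything else — relative compactness, compactness of $\Sigma_{\le 1}$, the form of the limit — is quoted directly from earlier results in the paper (Lemmas~\ref{noescape}, \ref{difftozero}, \ref{closertoode}, Proposition~\ref{propo}, and the Arzel\`a--Ascoli remark), so no new estimates are required.
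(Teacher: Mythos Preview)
Your proof is correct and follows essentially the same contradiction argument as the paper: assume no such $R_0$ exists, extract a sequence with $r(l_j)\uparrow\infty$ whose rescaled trajectories stay $(\delta_3-\delta_2)$-far from every admissible solution, and invoke Lemma~\ref{closertoode} (via relative compactness) to obtain a limit that \emph{is} such a solution, yielding the contradiction. Your treatment is in fact more careful than the paper's about the quantifier structure and about why the distance to $\Sigma_{\le 1}$ is attained, but the underlying idea is identical.
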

\begin{proof}
Assume to the contrary that $\exists \ r(l) \uparrow \infty$ such that
$\hat{x}(T_l + \cdotp)$ is at least $\delta_3 - \delta_2$ away from any solution
to the $DI$. It follows from Lemma~\ref{closertoode} that
there exists a subsequence of 
$\{ \hat{x}(T_l + t), 0 \le t \le T \ :\ l \subseteq \mathbb{N} \}$
guaranteed to converge, in $C([0,T], \mathbb{R}^d)$,
to a solution of $\dot{x}(t) \in h_\infty (x(t))$ such that $\lVert x(0) \rVert \le 1$. 
This is a contradiction.
 \end{proof}
 \paragraph{}
 \textit{It is worth noting that $R_0$ may be sample path dependent.
Since $T = T(\delta_2 - \delta_1) +1$ we get $\lVert \hat{x}([T_l + T]) \rVert < \delta_3$
for all $T_l$ such that $\lVert \overline{x}(T_l)\rVert (=r(l)) > R_0$.}
\subsection{Stability theorem}\label{mainresult}
\paragraph{}
We are now ready to prove the stability
of a \textit{SRI} given by (\ref{eq:recursiveinclusion}) under the assumptions
$(A1) - (A5)$.
If $\underset{n}{sup}$ $r(n) < \infty$, then the iterates are stable and there is nothing
to prove. If on the other hand $\underset{n}{sup}$ $r(n) = \infty$, there
exists $\{l\} \subseteq \{n\}$ such that $r(l) \uparrow \infty$.
It follows from Lemma~\ref{closertoode}  that any limit point of
$\left\{ \hat{x}(T_{l}+t), t \in [0,T] 
 : \{l\} \subseteq \{n\} \right\}$ is of the form $x(t) = x(0) +  \int_0^t y(s) \ ds$, 
 where $y(t) \in h_{\infty}(x(t))$ for $t \in [0,T]$. 
 From assumption $(A4)$, we have that $\lVert x(T) \rVert < \delta_2$.
 Since the time intervals are roughly $T$ apart, for large values of $r(n)$ we can conclude
 that $\lVert \hat{x} \left(T_{n+1}^{-} \right) \rVert < \delta_3$, where
 $\hat{x}(T_{n+1}^{-}) \ =$ $\lim_{t \uparrow t(m(n+1))} \hat{x}(t)$, $t \in 
\left[T_{n}, T_{n+1} \right)$.
 \begin{theorem}[Stability Theorem for DI]
\label{stability}
  Under assumptions $(A1) - (A5)$, $\underset{n}{sup} \lVert x_{n} \rVert < \infty$ \textit{a.s.}
\end{theorem}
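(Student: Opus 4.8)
The plan is to argue by contradiction, assuming $\sup_n \lVert x_n \rVert = \infty$ on a set of positive probability, which by Lemma~\ref{noescape} forces $\sup_n r(n) = \infty$ on that set (since $\hat x$ stays bounded a.s., $\overline x$ can only be unbounded if the rescaling factors $r(n)$ blow up). So it suffices to work on the a.s. event $A$ where $\hat\zeta_n$ converges, restricted to a sample path with $r(n)\uparrow\infty$ along a subsequence, and derive a contradiction. The engine of the contradiction is the following dichotomy for each block $[T_n, T_{n+1})$: either $r(n) \le R_0$, in which case $\lVert \overline x(T_n)\rVert \le R_0$ and the trajectory cannot have travelled far from the origin at the start of the block; or $r(n) > R_0$, in which case Corollary~\ref{r_0} gives $\lVert \hat x(T_{n+1}^-)\rVert < \delta_3$, i.e. $\lVert \overline x(T_{n+1}^-)\rVert < \delta_3\, r(n) = \delta_3 \lVert \overline x(T_n)\rVert$ — the rescaled trajectory contracts the norm by a fixed factor $\delta_3 < 1$ over each such block.

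Next I would convert this per-block contraction into a global bound. Fix a sample path in $A$ with $r(n)\uparrow\infty$. Choose $n_0$ large enough that $r(n_0) > R_0$ and in fact large enough that $\overline x$ has ``entered the regime'' where the dichotomy bites. For every subsequent block that starts outside $\overline B_{R_0}(0)$ we gain a factor $\delta_3$; the only way the norm can grow back up is during a block that starts inside $\overline B_{R_0}(0)$, but then $\lVert \overline x(T_n)\rVert \le R_0$ and over the next block the Gronwall inequality (applied to $\overline x$, using the point-wise bound $\lVert y_n\rVert \le K(1+\lVert x_n\rVert)$ together with the a.s. bound $M_\omega$ on the partial sums of $a(k)M_{k+1}$ established just before Lemma~\ref{difftozero}) controls $\lVert \overline x(T_{n+1}^-)\rVert$ by a sample-path-dependent constant, say $R_1 := (R_0 + M_\omega + (T+1)K)e^{K(T+1)}$, depending only on $R_0$, $M_\omega$, $T$, $K$. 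Hence once $\overline x$ leaves $\overline B_{R_1}(0)$ it can never come back above $R_1$: each block either keeps it bounded (if it started inside $\overline B_{R_0}$) or strictly contracts it (if it started outside). Therefore $\sup_n \lVert \overline x(T_n)\rVert \le R_1 < \infty$ on this sample path, contradicting $r(n)\uparrow\infty$.

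More carefully, I would phrase the last step as: let $\mathcal{T} = \{ n : \lVert \overline x(T_n)\rVert > R_1 \}$; I claim $\mathcal{T}$ is finite. If $n \in \mathcal{T}$ then $r(n) = \lVert\overline x(T_n)\rVert > R_1 > R_0$, so Corollary~\ref{r_0} and the remark after it give $\lVert \overline x(T_{n+1}^-)\rVert < \delta_3 \lVert\overline x(T_n)\rVert$; and since between $T_{n+1}^-$ and $T_{n+1}$ the iterate changes negligibly (one step of size $a(m(n+1))\to 0$, controlled again by Gronwall / the convergence of $\hat\zeta$), we also get $\lVert\overline x(T_{n+1})\rVert < \delta_3' \lVert\overline x(T_n)\rVert$ for some fixed $\delta_3' \in (\delta_3,1)$ for all $n$ large. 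So along any run of consecutive indices in $\mathcal{T}$ the norm decays geometrically; in particular it must exit $\mathcal{T}$ after finitely many steps, and once it is in $\overline B_{R_1}(0)$ with $\lVert\overline x(T_n)\rVert$ possibly still above $R_0$, the Gronwall bound keeps the next value $\le R_1$. Thus $\sup_n \lVert\overline x(T_n)\rVert \le R_1$ and, interpolating within blocks by one more Gronwall estimate, $\sup_{t}\lVert\overline x(t)\rVert < \infty$, so $\sup_n\lVert x_n\rVert<\infty$; contradiction with the assumption, proving the theorem a.s.

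The main obstacle I anticipate is the bookkeeping around the ``roughly $T$ apart'' blocks and the endpoint $T_{n+1}^-$ versus $T_{n+1}$: Corollary~\ref{r_0} and Lemma~\ref{closertoode} are stated for $t\in[0,T]$ while the blocks have length in $[T, T+a(m(n+1)-1))$, so one must check that the extra sliver of time beyond $T$ (which shrinks to length $0$) does not spoil the bound $\lVert\hat x(T_{n+1}^-)\rVert < \delta_3$ — this is exactly where $\delta_2 < \delta_3 < \delta_4 < 1$ buys the slack (the solution of the limiting DI is below $\delta_2$ at time $T(\delta_2-\delta_1)$ and stays in a neighborhood after, and $T$ was chosen as $T(\delta_2-\delta_1)+1$ precisely to absorb this). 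The other delicate point is making the contradiction genuinely ``almost sure'': one fixes the a.s. event $A$ from Lemma~\ref{noescape}/convergence of $\hat\zeta$, and all the sample-path-dependent constants ($M_\omega$, $K_\omega$, $R_0$, $R_1$) live on that event, so the argument runs pathwise on $A$ and yields $\sup_n\lVert x_n\rVert<\infty$ for every $\omega\in A$.
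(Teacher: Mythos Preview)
Your proposal is correct and follows essentially the same route as the paper: contradiction via $r(n)\uparrow\infty$, per-block contraction from Corollary~\ref{r_0} whenever $r(n)>R_0$, a Gronwall bound on the growth over any block starting inside $\overline B_{R_0}(0)$, and the resulting impossibility of $\lVert\overline x(T_n)\rVert\to\infty$. One small clarification: there is no iterate step ``between $T_{n+1}^-$ and $T_{n+1}$'' since $\overline x$ is continuous (the jump in $\hat x$ there is purely the change of scaling factor); the genuine extra step is from $[T_n+T]$ to $T_{n+1}$, and your $\delta_3'$ plays exactly the role of the paper's $\delta_4$ in absorbing it.
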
 

\begin{proof}
 As explained earlier it is sufficient to consider the case when 
 $ \underset{n}{sup} \ r(n) = \infty$. 
 Let $\{l\} \subseteq \{n\}$ such that $r(l) \uparrow \infty$.
 Recall that $T_l = t(m(l))$ and that $[T_{l}+T] = max\{t(k) \ |\ t(k) \le T_{l}+T \}$.
 \paragraph{}
We have $\lVert x(T) \rVert < \delta_2$ since $x(t)$ is a solution, up to time $T$, 
to the $DI$ given by $\dot{x}(t) \in h_{\infty}(x(t))$ and we have fixed $T = T(\delta_2 - \delta_1) + 1$.
From Lemma~\ref{closertoode} we conclude that 
there exists $N$ such that all of the following happen: 
\begin{itemize}
 \item[(i)]
$m(l) \ge N$ $\implies$ $\lVert \hat{x}([T_{l}+T]) \rVert < \delta_3$.
\item[(ii)]  $n \ge N$ $\implies$
$a(n) < \frac{\delta_4 - \delta_3}{\left[K(1+K_{\omega})+M_{\omega}\right]}$.
\item[(iii)]  $n \ > \ m \ \ge \ N$ $\implies$
$\lVert \hat{\zeta}_{n} \ - \ \hat{\zeta}_{m} \rVert < M_{\omega}$.
\item[(iv)] $m(l) \ge N$ $\implies$  $r(l) \ > \ R_0$.
\end{itemize}
In the above, $R_0$ is defined in the statement of Corollary~\ref{r_0} and
$K_{\omega}$, $M_{\omega}$ are explained in Lemma~\ref{noescape}.
\paragraph{}
Recall that we chose 
$\underset{x \in \mathcal{A}}{\sup} \lVert x \rVert = \delta_1 < \delta_2 < \delta_3 < \delta_4 <1$ in Section~\ref{assumptions}.
Let $m(l) \ge N$ and $t(m(l+1)) \ = \ t(m(l) + k + 1)$ for some $k \ge 0$.
Clearly from the manner in which the $T_{n}$ sequence is defined, we have
$t(m(l)+k) \ = \ \left[ T_{l}+T \right]$. 
As defined earlier $\hat{x}(T_{n+1}^{-}) \ =$ $\lim_{t \uparrow t(m(n+1))} \hat{x}(t)$, $t \in 
\left[T_{n}, T_{n+1} \right)$ and $n \ge 0$.
Consider the equation
\begin{equation}\nonumber
 \hat{x}(T_{l+1}^{-})\ = \ \hat{x}(t(m(l)+k)) \ +\ a(m(l)+k) \left( 
\hat{y}(t(m(l)+k))+ \hat{M}_{m(l)+k+1} \right).
\end{equation}
Taking norms on both sides we get,
\[
\lVert \hat{x}(T_{l+1}^{-}) \rVert\ \le \  
\lVert \hat{x}(t(m(l)+k)) \rVert \ +\ 
a(m(l)+k) \lVert 
\hat{y}(t(m(l)+k)) \rVert \ + \ a(m(l)+k)\lVert \hat{M}_{m(l)+k+1} \rVert. 
\]
From the way we have chosen $N$ we conclude that:
\[
 \lVert \hat{y}(t(m(l)+k)) \rVert \ \le \ K \left( 1 + \lVert \hat{x}(t(m(l)+k) \rVert \right)
\ \le \ K \left( 1 + K_{\omega} \right) \ and\ that
\]
\begin{equation}\nonumber
\lVert \hat{M}_{m(l)+k+1} \rVert = \ \lVert \hat{\zeta}_{m(l)+k+1} 
- \hat{\zeta}_{m(l)+k}\rVert \ \le M_{\omega}. 
\end{equation}
Thus we have that,
\begin{equation}\nonumber
\lVert \hat{x}(T_{l+1}^{-}) \rVert \ \le \ 
\lVert \hat{x}(t(m(l)+k))\rVert \ + \ a(m(l)+k) \left( K(1+K_{\omega}) + M_{\omega} \right).
\end{equation}
Finally we have that $\lVert \hat{x}(T_{l+1}^{-}) \rVert \ < \ \delta_4$ and
\begin{equation} \label{eq:mainthm}
 \frac{\lVert \overline{x}(T_{l+1}) \rVert}{\lVert \overline{x}(T_{l}) 
 \rVert} \ = \ \frac{\lVert \hat{x}(T_{l+1}^{-}) \rVert}{\lVert \hat{x}(T_{l}) 
 \rVert} \ < \delta_4 < 1. 
\end{equation}

It follows from (\ref{eq:mainthm}) that $\lVert \overline{x}(T_{n+1}) \rVert < 
\delta_4 \lVert \overline{x}(T_n) \rVert $ if $\lVert \overline{x}(T_n) \rVert > R_0$.
From Corollary~\ref{r_0} and the aforementioned we get that
the trajectory falls at an exponential rate till it enters $\overline{B}_{R_{0}}(0)$. 
Let $t \le T_{l}$, $t \in \left[T_{n}, T_{n+1}\right)$ and $n+1 \le l$, 
be the last time that $\overline{x}(t)$ jumps from $\overline{B}_{R_{0}}(0)$ to
the outside of the ball. It follows that 
$\lVert \overline{x}(T_{n+1}) \rVert \ge \lVert \overline{x}(T_l) \rVert$.
Since $r(l) \uparrow \infty$, $\overline{x}(t)$ would be forced to make 
larger and larger jumps within an interval of $T+1$.
This leads to a contradiction since the maximum jump within any fixed time interval
can be bounded using the \textit{Gronwall} inequality.  
 \end{proof}
\paragraph{}
We now state one of the main theorems of this paper. 

\begin{theorem}
\label{main}
Under assumptions $(A1)-(A5)$,
almost surely, the sequence $\{x_{n}\}_{n \ge 0}$ generated by the
stochastic recursive inclusion, given by (\ref{eq:recursiveinclusion}), is bounded
and converges to a closed, connected, internally chain transitive and invariant
set of $\dot{x}(t) \in h(x(t))$.
\end{theorem}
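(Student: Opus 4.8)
The plan is to combine the Stability Theorem (Theorem~\ref{stability}) with the existing convergence machinery of Bena\"{i}m, Hofbauer and Sorin~\cite{Benaim05}. The structure of the argument splits cleanly into two halves: first establish almost sure boundedness of the iterates, then invoke the set-valued dynamical systems results to identify the limit set. The first half is already done for us: Theorem~\ref{stability} gives $\sup_n \lVert x_n \rVert < \infty$ almost surely under $(A1)$--$(A5)$. So the real content of this theorem is to feed this boundedness into the right convergence theorem.

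First I would fix a sample point $\omega$ in the probability-one event on which the iterates are bounded, and on which the martingale noise is well-behaved (the convergence of $\{\zeta_n\}$ from Lemmas~\ref{noiseprelim} and~\ref{noiseconvergence}, applied to the unscaled recursion rather than the rescaled one). On this event, the hypotheses needed to apply \textit{Theorem 3.6} of~\cite{Benaim05} are met: $h$ is a Marchaud map by $(A1)$; the step-sizes satisfy the standard conditions by $(A2)$; the martingale difference noise satisfies the square-integrability bound of $(A3)$; and now the iterates are known to be bounded almost surely. I would then quote \textit{Theorem 3.6} together with \textit{Lemma 3.8} of~\cite{Benaim05}, which together say precisely that the linearly interpolated trajectory $\overline{x}(\cdot)$ is an asymptotic pseudotrajectory of the set-valued semiflow $\Phi$ associated with $\dot{x}(t) \in h(x(t))$, and that consequently the limit set $L(\overline{x})$ of the interpolated trajectory — equivalently, the set of limit points of $\{x_n\}$ — is a nonempty, compact, connected, internally chain transitive, and invariant set for that differential inclusion. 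This yields exactly the conclusion of Theorem~\ref{main}.

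The main obstacle, and the step I would be most careful about, is the bookkeeping distinction between the \emph{rescaled} objects (the $\hat{x}$, $\hat{y}$, $\hat{M}$, $\hat{\zeta}$ trajectories on which Section~\ref{lemmas} and Theorem~\ref{stability} operate) and the \emph{original} objects needed to invoke~\cite{Benaim05}. The convergence theorem of~\cite{Benaim05} is a statement about the genuine iterates $\{x_n\}$ and their interpolation $\overline{x}$, and it needs the noise condition $E[\lVert M_{n+1}\rVert^2 \mid \mathcal{F}_n] \le K(1+\lVert x_n\rVert^2)$ to be leveraged \emph{given} that $\sup_n \lVert x_n\rVert < \infty$ — so once boundedness is in hand, $\sup_n E[\lVert M_{n+1}\rVert^2 \mid \mathcal{F}_n] < \infty$ on the stability event and the standard martingale-convergence argument for $\{\zeta_n\}$ goes through for the unscaled recursion, not just the rescaled one. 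I would make explicit that this is a re-run of Lemmas~\ref{noiseprelim}--\ref{noiseconvergence} with $r(n) \equiv 1$, justified by Theorem~\ref{stability}, rather than a corollary of them. Beyond that, the proof is essentially a citation: everything nontrivial has been isolated into the stability theorem, and~\cite{Benaim05} supplies the rest.

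\begin{proof}
By Theorem~\ref{stability}, $\sup_n \lVert x_n \rVert < \infty$ almost surely; let $\Omega_0$ denote the probability-one event on which this holds. On $\Omega_0$ the conditional second moments $E[\lVert M_{n+1}\rVert^2 \mid \mathcal{F}_n] \le K(1+\lVert x_n\rVert^2)$ are uniformly bounded in $n$, so by the martingale convergence theorem the sequence $\zeta_n := \sum_{k=0}^{n-1} a(k) M_{k+1}$ converges almost surely (this is the argument of Lemmas~\ref{noiseprelim} and~\ref{noiseconvergence} applied with scaling factor $1$, now legitimate because boundedness of the iterates has been established independently). Thus, almost surely, the iterates are bounded and the additive noise is asymptotically negligible in the sense required by~\cite{Benaim05}. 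Assumption $(A1)$ states that $h$ is a Marchaud map and $(A2)$ gives the standard step-size conditions. Hence the hypotheses of \textit{Theorem 3.6} and \textit{Lemma 3.8} of Bena\"{i}m, Hofbauer and Sorin~\cite{Benaim05} are satisfied: the linearly interpolated trajectory $\overline{x}(\cdot)$ is an asymptotic pseudotrajectory of the set-valued semiflow associated with $\dot{x}(t) \in h(x(t))$, and therefore the set of limit points of $\{x_n\}_{n \ge 0}$ — which coincides with the limit set $L(\overline{x})$ of the interpolated trajectory — is almost surely a nonempty, closed (compact), connected, internally chain transitive, and invariant set of $\dot{x}(t) \in h(x(t))$. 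Consequently $\{x_n\}_{n \ge 0}$ converges to such a set, which is the claim.
\end{proof}
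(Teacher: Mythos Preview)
Your proposal is correct and follows essentially the same approach as the paper: invoke Theorem~\ref{stability} for almost sure boundedness, then cite \textit{Theorem 3.6} and \textit{Lemma 3.8} of Bena\"{i}m, Hofbauer and Sorin~\cite{Benaim05} for the convergence to a closed, connected, internally chain transitive invariant set. Your additional care in distinguishing the rescaled from the unscaled noise sequences, and in noting that the martingale convergence argument must be re-run with $r(n)\equiv 1$ once stability is established, is a useful clarification that the paper's own two-line proof leaves implicit.
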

\begin{proof}
 The stability of the iterates is shown in Theorem~\ref{stability}.
 The convergence can be proved under assumptions $(A1)-(A3)$ 
 and the stability of the iterates
in exactly the same manner as in \textit{Theorem 3.6 \& Lemma 3.8} 
of \textbf{Bena\"{i}m, Hofbauer and Sorin} \cite{Benaim05}.
 \end{proof}
  \paragraph{}
We have thus far shown that under assumptions $(A1)-(A5)$ the $SRI$ given by ($\ref{eq:recursiveinclusion}$)
is stable and converges to a closed, connected, internally chain transitive and invariant set.
\newpage
\subsection{Stability theorem under modified assumptions}\label{AccuMainSec}
\paragraph{}
In $(A4)$ we assumed that $Liminf_{c \to \infty} h_{c}(x)$ is nonempty 
for all $x \in \mathbb{R}^d$.
In this section we shall develop a stability criterion for the case when we no longer make
such an assumption. In other words, we work with a modified version of assumption $(A4)$
that we call $(A6)$.
\subsection*{Modification of Assumption (A4)}
\paragraph{}
Recall the following \textit{SRI}:
\begin{equation}\label{eq:recursiveinclusion1}
  x_{n+1}\ = x_{n}\ +\ a(n) \left[ y_{n}\ +\ M_{n+1} \right],\ \text{for }n \ge 0.
\end{equation}
Since $h_c$ is point-wise bounded for each $c \ge 1$, we have
$\underset{y \in h_c (x)}{sup}$ $\lVert y \rVert$ $\le K(1+ \lVert x \rVert)$, where $x \in \mathbb{R}^d$
(see Proposition~\ref{propo}).
This implies that $\{y_c\}_{c \ge 1}$, where $y_c \in h_c (x)$,
has at least one convergent subsequence. It follows from the definition of
\textit{upper-limit} of a sequence of sets (see Section~\ref{definitions})
that $Limsup_{c \to \infty} h_c (x)$ is non-empty for every $x \in \mathbb{R}^d$. 
It is worth noting that $Liminf_{c \to \infty} h_c(x)$ $\subseteq \ 
Limsup_{c \to \infty} h_c(x)$ for every $x \in \mathbb{R}^d$. 
Another important point to consider is that the lower-limits
of sequences of sets are harder to compute than their upper-limits,
see \textbf{Aubin} \cite{AubinSet} for more details.
\paragraph{}
Recall that 
 $h_{c}(x)$ $= \ \{ y \ | \ cy \in h(cx) \}$, where $x \in \mathbb{R}^d$
 and $c \ge 1$. Clearly the upper-limit,
$\ Limsup_{c \to \infty } \ h_{c}(x)
= \{y \ |\ \underset{c \to \infty}{\underline{lim}}d(y, h_c (x))= 0 \}$ is nonempty
for every $x \in \mathbb{R}^{d}$. 
For $A \subseteq \mathbb{R}^d$,  $\overline{co}(A)$ 
denotes the closure of the convex hull of $A$, $i.e.,$ the 
closure of the smallest convex set containing $A$. 
\[ \text{Define }
h_{\infty}(x) := \overline{co} \left( \ Limsup_{c \rightarrow \infty} \ h_{c}(x) \right).\]
Below we state the modification of assumption $(A4)$ that we call $(A6)$.
\\
\begin{itemize}
\item[\textbf{(A6)}] \textit{The differential inclusion $\dot{x}(t) \ \in \ h_{\infty}(x(t))$ has an attracting set
 $\mathcal{A} \subset B_{1}(0)$ and $\overline{B}_{1}(0)$ is a subset of some fundamental neighborhood
 of $\mathcal{A}$.}
\end{itemize}
$\\$
Note that in 
$(A4)$, $h_{\infty}(x) := \overline{\ Liminf_{c \rightarrow \infty } \ h_{c}(x)}$
while in
$(A6)$, $h_{\infty}(x) := \overline{co} \left( \ Limsup_{c \rightarrow \infty} \ h_{c}(x) \right)$.
In this section we shall work with this new definition of $h_\infty$.
\begin{proposition}\label{propo1}
 $h_{\infty}$ is a Marchaud map.
\end{proposition}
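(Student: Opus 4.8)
The plan is to verify the three defining properties of a Marchaud map for $h_\infty(x) := \overline{co}\left(Limsup_{c\to\infty} h_c(x)\right)$, reusing as much of Proposition \ref{propo} as possible. The convexity and compactness requirement (item (i) of the definition) is the cheapest: $h_\infty(x)$ is a closed convex hull, hence convex and closed by construction, so I only need boundedness of the underlying set $Limsup_{c\to\infty} h_c(x)$ to upgrade closedness to compactness. Point-wise boundedness (item (ii)) will then follow in tandem: I will show $\lVert z\rVert \le K(1+\lVert x\rVert)$ for every $z \in Limsup_{c\to\infty} h_c(x)$, exactly as in Proposition \ref{propo} — if $z$ is an accumulation point of a sequence $z_n \in h_{c_n}(x)$ with $c_n\to\infty$, then $\lVert z_n\rVert \le K(1/c_n + \lVert x\rVert) \le K(1+\lVert x\rVert)$ by the point-wise bound on $h_{c_n}$ established in Proposition \ref{propo}, and the bound passes to the limit. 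Since the ball $\overline{B}_{K(1+\lVert x\rVert)}(0)$ is convex and closed, taking $\overline{co}$ preserves the bound, so $h_\infty(x) \subseteq \overline{B}_{K(1+\lVert x\rVert)}(0)$; this is simultaneously point-wise boundedness and (with closedness) compactness.

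The real work is upper semicontinuity (item (iii)). I would take sequences $x_n\to x$, $y_n\to y$ with $y_n \in h_\infty(x_n)$ and aim to show $y \in h_\infty(x)$. By Carath\'eodory's theorem in $\mathbb{R}^d$, each $y_n$ is a convex combination $y_n = \sum_{j=0}^{d} \lambda_n^j w_n^j$ of at most $d+1$ points $w_n^j \in Limsup_{c\to\infty} h_c(x_n)$ — strictly, points in the closure, but a small perturbation argument lets us take $w_n^j$ genuinely in $Limsup_{c\to\infty} h_c(x_n)$ up to error $1/n$. The coefficients $\lambda_n^j$ live in the compact simplex and the points $w_n^j$ are norm-bounded by $K(1+\lVert x_n\rVert)$, which is bounded since $x_n$ converges; so along a subsequence $\lambda_n^j \to \lambda^j$ and $w_n^j \to w^j$. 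It then suffices to show each $w^j \in Limsup_{c\to\infty} h_c(x)$, because then $y = \sum_j \lambda^j w^j \in co(Limsup_{c\to\infty} h_c(x)) \subseteq h_\infty(x)$. To see $w^j \in Limsup_{c\to\infty} h_c(x)$: since $w_n^j \in Limsup_{c\to\infty} h_c(x_n)$, for each $n$ there is a large $c_n$ (which I can force to increase to $\infty$) and a point $u_n^j \in h_{c_n}(x_n)$ with $\lVert u_n^j - w_n^j\rVert < 1/n$; then $u_n^j \to w^j$, $x_n \to x$, $u_n^j \in h_{c_n}(x_n)$, $c_n\uparrow\infty$, so $w^j$ is an accumulation point of such a sequence, i.e. $w^j \in Limsup_{c\to\infty} h_c(x)$.

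The main obstacle I anticipate is bookkeeping the nested choices of subsequences and the $1/n$-approximations so that the "$c_n \uparrow \infty$" requirement is genuinely met for \emph{all} $d+1$ Carath\'eodory components simultaneously — one can handle this by a single diagonal choice of $c_n$ that exceeds the thresholds for each component and is strictly increasing. A secondary subtlety is that $h_\infty$ is defined via $\overline{co}$ of $Limsup$, not $Limsup$ itself, so I must be careful that the final limit $y$ lands in the \emph{closed} convex hull; but since $y$ is exhibited as an honest finite convex combination of points of $Limsup_{c\to\infty} h_c(x)$, it is already in $co(\cdot) \subseteq \overline{co}(\cdot) = h_\infty(x)$, so no extra closure step is needed. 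Everything else — convexity, closedness, the norm bound — is immediate from the construction and Proposition \ref{propo}, so the proof is short modulo this semicontinuity argument.
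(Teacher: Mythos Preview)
Your argument for (i) and (ii) is fine and matches the paper's. The upper-semicontinuity argument, however, has a genuine gap at the key step. You write: ``$u_n^j \to w^j$, $x_n \to x$, $u_n^j \in h_{c_n}(x_n)$, $c_n\uparrow\infty$, so $w^j$ is an accumulation point of such a sequence, i.e.\ $w^j \in Limsup_{c\to\infty} h_c(x)$.'' But the definition of $Limsup_{c\to\infty} h_c(x)$ collects accumulation points of sequences drawn from $h_c(x)$ at the \emph{fixed} point $x$, whereas your $u_n^j$ lives in $h_{c_n}(x_n)$ with $x_n$ varying. Upper semicontinuity of each $h_{c_n}$ does not let you replace $x_n$ by $x$ here (that would need lower semicontinuity), so the conclusion does not follow from the definition alone. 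This is precisely the content of assumption $(A5)$, which you never invoke: under $(A5)$ (with the $Limsup$-based $h_\infty$ as in this section) you get $w^j \in h_\infty(x)$ directly. Then, since $h_\infty(x)$ is convex, $y = \sum_j \lambda^j w^j \in h_\infty(x)$ and you are done. So the fix is one line, but without $(A5)$ the step is simply false in general, and your proposal as written omits the only nontrivial hypothesis needed for upper semicontinuity.

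For comparison, the paper takes a different route to the same place: instead of Carath\'eodory-decomposing each $y_n$ into $d+1$ pieces, it argues by contradiction via a separating linear functional $f$ for $y$ and the convex compact set $h_\infty(x)$, then shows that $Limsup_{c\to\infty} h_c(x_n)$ must meet the half-space $[f \ge \alpha + \epsilon/2]$ for large $n$ (otherwise the closed convex hull would also miss it, contradicting $y_n \in h_\infty(x_n)$), extracts a single convergent sequence $z_n$ from those intersections, and applies $(A5)$ once to land the limit in $h_\infty(x)$, contradicting the separation. The separation trick trades your $d+1$ parallel diagonal extractions for one, at the cost of the extra sub-claim about the half-space intersection; your Carath\'eodory approach is more constructive but requires $(A5)$ to be applied $d+1$ times. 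Either works, provided $(A5)$ is actually used.
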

\begin{proof}
 From the definition of $h_{\infty}$ 
 it follows that $h_\infty(x)$ is convex, compact for all $x \in \mathbb{R}^d$ and
 $h_\infty$ is point-wise bounded.
It is left to prove that $h_{\infty}$ is an upper-semicontinuous map.
 \paragraph{}
 Let $x_{n} \to x$, $y_{n} \to y$ and $y_{n} \in h_{\infty}(x _n)$, 
 for all $n \ge 1$.
 We need to show that $y \in h_{\infty}(x)$. We present a proof by contradiction.
 Since $h_{\infty}(x)$ is convex and compact, $y \notin h_{\infty}(x)$
 implies that there exists a linear functional on $\mathbb{R}^{d}$, say $f$, such that
 $\underset{z \in h_{\infty}(x)}{sup}$ $f(z) \le \alpha - \epsilon$
 and $f(y) \ge \alpha + \epsilon$, for some
 $\alpha \in \mathbb{R}$ and $\epsilon > 0$. Since $y_{n} \to y$, there exists 
 $N > 0$ such that for all $n \ge N$, $f(y_{n}) \ge \alpha + \frac{\epsilon}{2}$. In other
 words, $h_{\infty}(x) \cap  [f \ge \alpha + \frac{\epsilon}{2}] \neq \phi$ for
 all $n \ge N$. We use the notation $[f \ge a]$ to denote the set
 $\left\{ x \ |\ f(x) \ge a \right\}$. For the sake of convenience let us denote the set
 $Limsup_{c \to \infty}h_{c}(x)$ by $A(x)$, where $x \in \mathbb{R}^{d}$.
 We claim that $A(x_{n}) \cap [f \ge \alpha + \frac{\epsilon}{2}] \neq \phi$
 for all $n \ge N$. We prove this claim later,
 for now we assume that the claim
 is true and proceed. Pick $z_{n} \in A(x_{n}) \cap [f \ge \alpha + \frac{\epsilon}{2}]$
 for each $n \ge N$. It can be shown that $\{z_{n}\}_{n \ge N}$ is norm bounded
 and hence contains a convergent subsequence, 
 $\{z_{n(k)}\}_{k \ge 1} \subseteq \{z_{n}\}_{n \ge N}$. 
 Let $\underset{k \to \infty}{\lim} z_{n(k)} = z$.
Since $z_{n(k)} \in Limsup_{c \to \infty}(h_{c}(x_{n(k)}))$, 
 $\exists$ $c_{n(k)} \in \mathbb{N}$ such that $\lVert w_{n(k)} - z_{n(k)} \rVert
 < \frac{1}{n(k)}$, where $w_{n(k)} \in h_{c_{n(k)}}(x_{n(k)})$. 
 We choose the sequence
 $\{c_{n(k)}\}_{k \ge 1}$ such that $c_{n(k+1)} > c_{n(k)}$ for each $k \ge 1$.
 \\ \indent
 We have the following: $c_{n(k)} \uparrow \infty$, $x_{n(k)} \to x$, 
 $w_{n(k)} \to z$ and $w_{n(k)} \in h_{c_{n(k)}}(x_{n(k)})$, for all $ k \ge 1$. 
 It follows
 from assumption $(A5)$ that $z \in h_{\infty}(x)$. Since $z_{n(k)} \to z$
 and $f(z_{n(k)}) \ge \alpha + \frac{\epsilon}{2}$ for each $k \ge 1$, we have that
 $f(z) \ge \alpha + \frac{\epsilon}{2}$. This contradicts the earlier conclusion that
 $\underset{z \in h_{\infty}(x)}{sup}$ $f(z) \le \alpha - \epsilon$.
 \paragraph{}
 It remains to prove that  $A(x_{n}) \cap [f \ge \alpha + \frac{\epsilon}{2}] \neq \phi$
 for all $n \ge N$. If this were not true, then
 $\exists \{m(k)\}_{k \ge 1} \subseteq \{n \ge N\}$ 
 such that $A(x_{m(k)}) \subseteq [f < \alpha + \frac{\epsilon}{2}]$
 for all $k$. It follows that \\
$h_\infty(x_{m(k)}) = \overline{co}(A(x_{m(k)})) \subseteq 
 [f \le \alpha + \frac{\epsilon}{2}]$ for each $k \ge 1$. 
 Since $y_{n(k)} \to y$, $\exists N_{1}$ such that for all $n(k) \ge N_1$, 
 $f(y_{n(k)}) \ge \alpha + \frac{3 \epsilon}{4}$. This is a contradiction.
\end{proof}
\paragraph{}
We are now ready to state the second stability theorem for an \textit{SRI} given by
(\ref{eq:recursiveinclusion1}) under a modified set of
assumptions. We retain assumptions $(A1)-(A3)$, replace $(A4)$ by $(A6)$
and finally in $(A5)$ we let $h_{\infty}(x) :=
\overline{co} \left( \ Limsup_{c \rightarrow \infty} \ h_{c}(x) \right)$. We state the
theorem under these updated set of assumptions.
\begin{theorem}[Stability Theorem for DI \#2]\label{AccuMain}
Under assumptions $(A1)-(A3)$, $(A5)$ (with $h_\infty (x) := \overline{co} 
( Limsup_{c \to \infty} h_c (x))$) and $(A6)$,
almost surely the sequence $\{x_{n}\}_{n \ge 0}$ generated by the
stochastic recursive inclusion, given by (\ref{eq:recursiveinclusion1}) is bounded
and converges to a closed, connected internally chain transitive invariant
set of $\dot{x}(t) \in h(x(t))$.
\end{theorem}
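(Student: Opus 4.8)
The plan is to run the argument of Theorem~\ref{main} essentially verbatim, with Proposition~\ref{propo1} in the role of Proposition~\ref{propo}, assumption $(A6)$ in the role of $(A4)$, and the modified map $h_\infty(x)=\overline{co}\!\left(Limsup_{c\to\infty}h_c(x)\right)$ used throughout. As there, everything reduces to establishing almost sure stability — the analogue of Theorem~\ref{stability}: if $\sup_n r(n)<\infty$ there is nothing to prove, and if $\sup_n r(n)=\infty$ one derives a contradiction. Once stability is in hand, convergence to a closed, connected, internally chain transitive invariant set of $\dot x(t)\in h(x(t))$ follows from $(A1)$--$(A3)$ and the stability of the iterates exactly as in Theorem~3.6 and Lemma~3.8 of \cite{Benaim05}; that portion of the argument never refers to $h_\infty$, so it needs no change.

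First I would note that the preliminary machinery of Section~\ref{lemmas} is insensitive to which $h_\infty$ is used. The noise statements (Lemmas~\ref{noiseprelim} and \ref{noiseconvergence}), the no-escape bound (Lemma~\ref{noescape}), and the estimate $\sup_{t\in[T_n,T_n+T]}\|x^n(t)-\hat x(t)\|\to 0$ (Lemma~\ref{difftozero}) depend only on the construction of the interpolated and rescaled trajectories, on the martingale hypothesis $(A3)$, and on the point-wise boundedness of the maps $h_c$, $c\ge1$, with common constant $K$ — and the latter is part of Proposition~\ref{propo}, which concerns only $h_c$ and is unchanged. Likewise the set-up at the end of Section~\ref{assumptions}, namely the chain $\delta_1<\delta_2<\delta_3<\delta_4<1$ obtained from compactness of $\mathcal A\subset B_1(0)$ and the choice $T=T(\delta_2-\delta_1)+1$ ensuring that any solution of $\dot x\in h_\infty(x)$ starting in $\overline{B}_1(0)$ satisfies $\|x(t)\|<\delta_2$ for $t\ge T(\delta_2-\delta_1)$, carries over word for word with $(A6)$ replacing $(A4)$.

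The one place where the definition of $h_\infty$ genuinely matters is Lemma~\ref{closertoode} (hence also Corollary~\ref{r_0}), so that is where I would concentrate. Along a subsequence $\{l\}$ with $r(l)\uparrow\infty$, extract as before a limit $\hat x(T_l+\cdot)\to x(\cdot)$ in $C([0,T],\mathbb R^d)$ together with a weak-$L_2$ limit $\hat y(T_l+\cdot)\to y(\cdot)$, pass to Cesàro averages that converge strongly and then almost everywhere, and fix a point $t_0$ of almost-everywhere convergence. By Proposition~\ref{propo1} the set $h_\infty(x(t_0))$ is convex and compact, so to conclude $y(t_0)\in h_\infty(x(t_0))$ it suffices to show that every subsequential limit of the norm-bounded family $\{\hat y(T_{l(k)}+t_0)\}_k$ lies in $h_\infty(x(t_0))$; convexity and compactness then force the Cesàro limit $y(t_0)$ into the set. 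A subsequential limit does lie there: writing $\hat y(T_{l(k)}+t_0)\in h_{r(l(k))}(\hat x([T_{l(k)}+t_0]))$ with $r(l(k))\uparrow\infty$ and $\hat x([T_{l(k)}+t_0])\to x(t_0)$, assumption $(A5)$ — read now with $h_\infty(x)=\overline{co}(Limsup_{c\to\infty}h_c(x))$ — yields precisely that the limit belongs to $h_\infty(x(t_0))$. With Lemma~\ref{closertoode} re-established, Corollary~\ref{r_0} follows by the same contradiction, and then the Gronwall argument of Theorem~\ref{stability} applies unchanged: $(A6)$ gives $\|\hat x(T_{l+1}^-)\|<\delta_4<1$ once $r(l)>R_0$, the rescaled trajectory falls geometrically until it enters $\overline{B}_{R_0}(0)$, and $r(l)\uparrow\infty$ would force jumps within a fixed time window $T+1$ exceeding the Gronwall bound on the maximal jump, a contradiction.

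The main obstacle I anticipate is exactly this re-proof of Lemma~\ref{closertoode}: one must check that replacing $\overline{Liminf_{c\to\infty}h_c(x)}$ by $\overline{co}(Limsup_{c\to\infty}h_c(x))$ does not break the identification of limit points of the rescaled trajectory with solutions of $\dot x\in h_\infty(x)$. It does not, because the closed convex hull only enlarges $h_\infty$ (membership becomes easier, not harder), because the common bound $\|y\|\le K(1+\|x\|)$ on each $h_c$ keeps all relevant sequences norm-bounded, and because $(A5)$ is imposed directly with the modified $h_\infty$; convexity and compactness of $h_\infty(x(t_0))$ from Proposition~\ref{propo1} then supply the rest. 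Everything downstream of stability is untouched, so the theorem follows.
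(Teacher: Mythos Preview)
Your proposal is correct and follows essentially the same approach as the paper: the paper's own proof simply observes that Lemmas~\ref{noiseprelim}--\ref{closertoode} remain valid with $h_\infty(x)=\overline{co}(Limsup_{c\to\infty}h_c(x))$ and $(A5)$ read accordingly, that stability then goes through verbatim as in Theorem~\ref{stability}, and that convergence is again obtained from Theorem~3.6 and Lemma~3.8 of \cite{Benaim05}. You have in fact supplied more detail than the paper does on why Lemma~\ref{closertoode} survives the change (via Proposition~\ref{propo1} and the reinterpreted $(A5)$), which is exactly the right place to focus.
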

\begin{proof}
 The statements of $Lemmas~$ \ref{noiseprelim}$-$\ref{closertoode} hold true even when 
 $h_{\infty} := \overline{co} \left(\ Limsup_{c \rightarrow \infty} \ h_{c}(x) \right)$ 
 and $(A5)$ is interpreted as explained earlier. 
 The stability of the iterates can be proven in an identical manner 
 to the proof of Theorem~\ref{stability}.
Next,
we invoke \textit{Theorem 3.6 \& Lemma 3.8} of
 \textbf{Bena\"{i}m, Hofbauer and Sorin} \cite{Benaim05} to conclude that the iterates converge to a closed,
 connected, internally chain transitive and invariant set of $\dot{x}(t) \in h(x(t))$.
 \end{proof}
  \paragraph{}
\begin{remark} \label{a4'}
 Assumptions $(A4)$ and $(A6)$ required that $\dot{x}(t) \in h_{\infty}(x(t))$ have an attractor
 set inside $B_1 (0)$ (the open unit ball). Further, it required $\overline{B}_1 (0)$ to be
 in its fundamental neighborhood. 
 Note that 
$h_{\infty}(x)$ is defined as $\overline{\ Liminf_{c \rightarrow \infty } \ h_{c}(x)}$
when using $(A4)$ and it is defined as
$\overline{co} \left( \ Limsup_{c \rightarrow \infty} \ h_{c}(x) \right)$
when using $(A6)$.
Consider the following generalization of $(A4)/(A6)$.
 \begin{itemize}
  \item[\textbf{(A4)$'$/(A6)$'$}:] \textit{$\dot{x}(t) \in h_\infty (x(t))$ has an attractor set $\mathcal{A}$ such that
  $\mathcal{A} \subseteq B_a (0)$ and $\overline{B}_a(0)$ is a subset of its fundamental neighborhood,
  where $0 \le a < \infty$.}
 \end{itemize}
Note that $a$ could be greater than $1$, further since $\mathcal{A}$ is compact by definition,
$a$ is finite. A sufficient condition for $(A4)'/(A6)'$ is when $\mathcal{A}$ is a globally attracting,
Lyapunov stable set associated with $\dot{x}(t) \in h_\infty (x(t))$. In this case any compact set
is a fundamental neighborhood of $\mathcal{A}$.
\\ \indent At the beginning of Section~\ref{sec3} we constructed the rescaled trajectory by projecting
onto the unit ball around the origin. In order to use
$(A4)'/(A6)'$ we build the rescaled trajectory by projecting onto $\overline{B}_a(0)$ 
instead.
We can modify the proofs such that the statements of Theorems~\ref{main} and \ref{AccuMain}
remain true under assumptions $(A1)-(A3)$, $(A4)'/(A6)'$ and $(A5)$.
\end{remark}
\paragraph{}
\begin{remark}\label{lyapunova4}
The advantage of using $(A4)'/(A6)'$ is that one can conclude the stability of
the iterates by merely possessing the knowledge that the associated $DI$ of the infinity system
has a global attractor set. Consider the following trivial example
of a stochastic gradient descent algorithm with
linear gradient function of the from $-(Ax +b)$. The corresponding infinity system,
$\dot{x}(t) = -Ax$, is clearly ``related'' to the associated o.d.e. $\dot{x}(t) = -(Ax +b)$. 
Specifically, if there was a unique global minimizer then both the aforementioned o.d.e.'s have a global
attractor which in turn implies the stability of the iterates as discussed before.
This trivial example also illustrates a finer point that $h_\infty$ and $h$ could be related,
hence information about $h$ could help us ascertain if $(A4)'/(A6)'$ is satisfied. 
Whenever possible one could also construct Lyapunov functions to ascertain the same.
While we did not consider Lyapunov-type conditions for stability, it would be interesting
to extend the Lyapunov-type stability conditions developed for $SRE$'s 
by \textbf{Andrieu, Priouret and Moulines} \cite{Andrieu} to include $SRI$'s.
\end{remark}
\newpage
\section{Extensions to the stability theorem of Borkar and Meyn}\label{GenBorkarMeynSec}
\paragraph{}
We begin this section by listing the assumptions (See \textit{Section 2} of \cite{Borkar99}) and 
statement of the \textit{Borkar-Meyn Theorem} (See \textit{Section 2.1} of \cite{Borkar99}).
The notations used are consistent with those of equation~(\ref{eq:basicrecursion}).
\begin{enumerate}
 \item[(BM1)] \label{BM1}
(i) The function $h:\mathbb{R}^d\rightarrow
\mathbb{R}^d$
is Lipschitz continuous, with Lipschitz constant $L$. 
There exists a function $h_\infty:\mathbb{
R}^d \rightarrow \mathbb{R}^d$
such that $\underset{c \to \infty}{\lim} \frac{h(cx)}{c} =
h_\infty(x)$, for each
$x\in \mathbb{R}^d$.\\
(ii) $h_{c} \rightarrow h_{\infty}$ uniformly on compacts, as $c \to \infty$.\\
(iii) The o.d.e. $\dot{x}(t) = h_\infty(x(t))$
has the origin as the unique globally asymptotically stable equilibrium.\\
\item[(BM2)]
$\{ a(n) \}_{n \ge 0}$ is a scalar sequence such that: $a(n) \ge 0$, $\underset{n \ge 0}{\sum} a(n) = \infty$
 and 
 \\
 $\underset{n \ge 0}{\sum} a(n)^{2} < \infty$. Without loss of generality, we also let
 $\underset{n}{sup}\ a(n) \le 1$.
 \item[(BM3)]
 $\{ M_{n}\}_{n \ge 1}$ is a martingale difference sequence with respect to
 the filtration
 $\mathcal{F}_{n}$ $:=$ $ \sigma \left( x_{0}, M_{1}, \ldots, M_{n} \right) $, 
 $n \ge 0$. Thus,
 $E\left[ M_{n+1} | \mathcal{F}_{n} \right] $ $=\ 0 \ a.s.$, $\forall \ n \ge 0$.
$\{ M_{n}\}$
 is also square integrable with
 $E[\lVert M_{n+1} \rVert ^{2} | \mathcal{F}_{n}]$ $\le$ $L \left( 1 + \lVert x_{n} \rVert ^{2} \right)$, for some
 constant $L > 0$. Without loss of generality, assume that the same constant, $L$,
 works for both $(BM1)(i)$ and $(BM3)$.
\end{enumerate}

\begin{theorem}[Borkar-Meyn Theorem]
\label{BorkarMeyn}
Suppose (BM1)-(BM3) hold.
Then
$\underset{n}{sup} \lVert x_n \rVert <\infty$ almost surely.
Further, the sequence $\{x_{n}\}$ converges almost surely to a
(possibly sample path dependent) compact connected internally chain transitive
invariant set of $\dot{x}(t) = h(x(t))$.
\end{theorem}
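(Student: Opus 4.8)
The plan is to obtain Theorem~\ref{BorkarMeyn} as a corollary of Theorem~\ref{main} by checking that $(BM1)$--$(BM3)$ are a special case of $(A1)$--$(A5)$. An $SRE$ is an $SRI$ in which $h(x)$ is the singleton $\{h(x)\}$ for every $x$; with this identification $h_c(x) = \{h(cx)/c\}$ is again a singleton, so $Liminf_{c \to \infty} h_c(x) = \{h_\infty(x)\}$ by $(BM1)(i)$, and the map $h_\infty$ of assumption $(A4)$ coincides with the $h_\infty$ appearing in $(BM1)$ (taking the closure of a singleton changes nothing). Hence it suffices to verify $(A1)$--$(A5)$ for this singleton-valued $h$ and then invoke Theorem~\ref{main}.

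First I would check $(A1)$: a single-valued map has convex, compact values; Lipschitz continuity with constant $L$ gives $\|h(x)\| \le \|h(0)\| + L\|x\| \le K(1+\|x\|)$ for a suitable $K$, which is point-wise boundedness; and a continuous single-valued map is trivially upper-semicontinuous as a set-valued map, so $h$ is Marchaud. Assumptions $(A2)$ and $(A3)$ are literally $(BM2)$ and $(BM3)$ (the finitely or infinitely many zero step-sizes permitted by $(BM2)$ may be discarded). For $(A4)$ I would observe that, since $h$ is $L$-Lipschitz, each $h_c(\cdot) = h(c\,\cdot)/c$ is $L$-Lipschitz, so $h_\infty$ is $L$-Lipschitz and $\dot{x}(t) = h_\infty(x(t))$ has unique solutions; by $(BM1)(iii)$ the origin is its unique globally asymptotically stable equilibrium, so $\mathcal{A} = \{0\}$ is an attractor with $\{0\} \subseteq B_1(0)$, and global asymptotic stability of an o.d.e.\ with unique solutions upgrades to uniform attraction on compact sets, so $\overline{B}_1(0)$ lies in a fundamental neighborhood of $\{0\}$; thus $(A4)$ holds (indeed the stronger condition of Remark~\ref{a4'} holds with any $a \ge 1$). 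Finally, for $(A5)$: if $c_n \uparrow \infty$, $x_n \to x$, and $y_n = h(c_n x_n)/c_n \to y$, then $\|y_n - h(c_n x)/c_n\| \le L\|x_n - x\| \to 0$, while $h(c_n x)/c_n \to h_\infty(x)$, so $y = h_\infty(x) \in h_\infty(x)$.

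With $(A1)$--$(A5)$ in hand, Theorem~\ref{main} gives stability of $\{x_n\}$ and convergence to a closed, connected, internally chain transitive, invariant set of $\dot{x}(t) \in h(x(t))$; since $h$ is single-valued this differential inclusion is precisely the o.d.e.\ $\dot{x}(t) = h(x(t))$, and by the definition of internally chain transitive sets in Section~\ref{definitions} such a limit set is automatically compact. This is exactly the conclusion of Theorem~\ref{BorkarMeyn}. A further point, which I would record as the mild relaxation promised in the abstract, is that $(BM1)(ii)$ is never actually used: the family $\{h_c\}_{c \ge 1}$ is equi-$L$-Lipschitz, so the pointwise convergence in $(BM1)(i)$ already forces uniform convergence on compact sets (an Arzel\`a--Ascoli argument), and moreover $h_\infty$ is automatically $L$-Lipschitz; hence Theorem~\ref{BorkarMeyn} continues to hold with $(BM1)(ii)$ omitted.

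I expect the only genuinely delicate step to be the passage from global asymptotic stability of the origin for $\dot{x}(t) = h_\infty(x(t))$ to the attracting-set-with-fundamental-neighborhood form demanded by $(A4)$; this is classical for o.d.e.'s with unique solutions but does rely on the Lipschitz-ness of $h_\infty$, which is why I would establish that property explicitly at the outset. Everything else is routine bookkeeping translating $(BM1)$--$(BM3)$ into $(A1)$--$(A5)$.
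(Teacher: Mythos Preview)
Your proposal is correct and follows essentially the same route as the paper: Theorem~\ref{BorkarMeyn} is stated as the classical result of \cite{Borkar99}, and in Section~\ref{BMRelax} the paper recovers it exactly as you do, by verifying that $(BM1)(i),(iii)$, $(BM2)$, $(BM3)$ imply $(A1)$--$(A5)$ (using the same triangle-inequality argument for $(A5)$ via the equi-$L$-Lipschitz property of the $h_c$) and then invoking Theorem~\ref{main}. Your observation that $(BM1)(ii)$ is redundant is also the paper's point, though you phrase it slightly differently---you note that equi-Lipschitzness plus pointwise convergence already forces uniform convergence on compacts, whereas the paper argues that the proof via Theorem~\ref{main} simply never calls on $(BM1)(ii)$; both observations are valid and lead to the same relaxation.
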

In what follows we illustrate a weakening of $(BM1)-(BM3)$ stated above using
Theorems~\ref{main} \& \ref{AccuMain}. Note that $(BM2)$ is the standard step-size assumption
while $(BM3)$ is the assumption on the martingale difference noise; we
endeavor to weaken $(BM1)$.
\subsection{Superfluity of (BM1)(ii) as a consequence of Theorem~\ref{main}}\label{BMRelax}
\paragraph{}
In this section we discuss in brief how the \textit{Borkar-Meyn Theorem} (Theorem~\ref{BorkarMeyn})
can be proven under $(BM1)(i),(iii)$, $(BM2)$ and $(BM3)$. In other words, we show that
$(BM1)(ii)$ is superfluous. In this direction we begin by showing the following:
A recursion given by (\ref{eq:basicrecursion}) satisfies $(BM1)(i),(iii)$, $(BM2)$ and $(BM3)$
\begin{Large}$\Rightarrow$ \end{Large} (\ref{eq:basicrecursion}) satisfies $(A1)-(A5)$ of Section~\ref{assumptions}.
The following implications are straightforward: $(BM1)(i),(iii) \Rightarrow (A1) \ \& \ (A4)$; 
$(BM2) \Rightarrow (A2)$; $(BM3) \Rightarrow (A3)$. We now show 
$(BM1)(i),(iii) \Rightarrow (A5)$. Given $x_n \to x$, $c_n \uparrow \infty$
and $h_{c_n}(x_n) \to y$ we need to show $y = h_{\infty}(x)$. We have the following:
\[
\lVert h_{c_n}(x_n) - h_{\infty}(x) \rVert \le \lVert h_{c_n}(x_n) - h_{c_n}(x) \rVert +
\lVert h_{c_n}(x) - h_{\infty}(x) \rVert.
\]
If $h$ is Lipschitz with constant $L$ then it can be shown that
$h_{c}$ ($h_c:x \mapsto \frac{h(cx)}{c}$, $x \in \mathbb{R}^d$) is Lipschitz, for every $c \ge 1$, with the same constant.
Further, $h_{c_n}(x) \to h_{\infty}(x)$ as $c_n \uparrow \infty$. Taking limits ($c_n \uparrow \infty$)
on both sides of the above equation gives $\underset{c_n \uparrow \infty}{\lim} h_{c_n}(x_n) = h_\infty(x)$
as required. Since $(A1)-(A5)$ are satisfied it follows from Theorem~\ref{main} that a $SRE$
satisfying $(BM1)(i), (iii), \ (BM2), \ (BM3)$ is stable and converges to a closed, connected, internally chain transitive and
invariant set of $\dot{x}(t) = h(x(t))$ (Theorem~\ref{BorkarMeyn}).
\paragraph{}
We discuss in brief how we work around using $(BM1)(ii)$ in proving the 
\textit{Borkar-Meyn Theorem}. 
The notations used herein are consistent with
those found in \textit{Chapter 3} of \textbf{Borkar} \cite{BorkarBook}. We list a few below
for easy reference.
\begin{enumerate}
 \item $\phi_n (\cdotp, x)$ denotes the solution to $\dot{x}(t) \in h_{r(n)}(x(t))$
 with initial value $x$.
 \item $\phi_{\infty}(\cdotp, x)$ denotes the solution to $\dot{x}(t) \in h_{\infty}(x(t))$
 with initial value $x$.
 \item $x^n(t)$, $t \in [0,T]$ denotes the solution to $\dot{x}^n(t) = h_{r(n)}(\hat{x}(T_n + t))$
 with initial value $x^n(0) = \hat{x}(T_n)$. Then
 $x^n (t) = \phi_{n}(t, \hat{x}(T_{n}))$, $t \in [0, T]$.
\end{enumerate}
In proving the
\textit{Borkar-Meyn Theorem} 
as outlined in \cite{Borkar99} $(BM1)(ii)$ is used
to show that for large values of $r(n)$, $\phi_n (t, \hat{x}(T_{n}))$ is `close'
to $\phi_{\infty} (t, \hat{x}(T_{n}))$, $t \in [0, T]$.
In this paper we deviate from \cite{Borkar99} in the definition of 
$x^n (t)$, $t \in [0,T]$, here $x^n (\cdotp)$ denotes the solution up to time $T$
of $\dot{x}^n (t) = \hat{y}(T_n + t) = h_{r(n)}(\hat{x}([T_n + t]))$
with $x^n (0) = \hat{x}(T_{n})$,
where $[\cdotp]$ is defined in Lemma~\ref{closertoode}. In other
words, we have the following:
 \begin{equation}\nonumber
  x^n (t) =  \ \hat{x}(T_{n}) \ +\ \sum_{l=0}^{k-1}
 \int_{t(m(n)+l)}^{t(m(n)+l+1)} \hat{y}(z) \,dz
 + \int_{t(m(n)+k)}^{t} \hat{y}(z) \,dz.
 \end{equation}
 For $t \in [t_n, t_{n+1})$, $\hat{y}(t)$ is a constant and equals $\hat{y}(t_n)$. We get the following:
 \begin{multline}\nonumber
 x^n(t) = \hat{x}(T_{n})  +  \sum_{l=0}^{k-1} a(m(n)+l)
 h_{r(n)} \left( \hat{x}([t(m(n) + l)]) \right) + 
\left(t-t(m(n)+k) \right) h_{r(n)} \left( \hat{x}([t(m(n)+k)]) \right). 
\end{multline}
The proof now proceeds along the lines of Section~\ref{mainresult}
\textit{i.e.,}
Lemmas~\ref{noiseprelim} - \ref{closertoode} and Theorem~\ref{stability};
we essentially show the following: If $r(n) \uparrow \infty$ then the $T$-length trajectories given by
$\{x^n (\cdotp)\}_{n \ge 0}$ have $\phi_{\infty}(x, t)$, $t \in [0, T]$, as the limit
point in $C([0,T], \mathbb{R}^d)$, where $x \in \overline{B}_{1}(0)$. This is proven in 
Lemmas~\ref{difftozero} and \ref{closertoode}, the proofs of which
do not require $(BM1)(ii)$.
\subsection{Further weakening of (BM1) as a consequence of Theorem~\ref{AccuMain}}\label{furtherBM}
\paragraph{}
In this section we use the second stability theorem (Theorem~\ref{AccuMain})
to answer the following question: If
$\underset{c \to \infty}{\lim} h_c (x)$ does not exist for all $x \in \mathbb{R}^{d}$,
then what are the sufficient conditions for the stability and convergence of the algorithm?
\paragraph{}
Taking our cue from assumption $(A6)$, we replace
$(BM1)$ with the following assumption, call it $(BM4)$.
\begin{itemize}
 \item[(BM4)(i)] \textit{The function $h:\mathbb{R}^d\rightarrow
\mathbb{R}^d$
is Lipschitz continuous, with Lipschitz constant $L$. 
Define the set-valued map, $h_\infty(x) := 
\overline{co}\left( Limsup_{c \to \infty} \{h_c (x)\} \right)$, where
$x\in \mathbb{R}^d$.}
\\
Note that 
$Limsup_{c \to \infty} \{h_c (x)\} = \{y \ |\ 
\underset{c \to \infty}{\underline{\lim}} \ \lVert h_c(x) - y \rVert = 0\} $.
\item[(BM4)(ii)] \textit{$\dot {x}(t) \in h_{\infty}(x(t))$ has an 
 attracting set, $\mathcal{A}$, with $\overline{B}_{1}(0)$ as a subset of its
 fundamental neighborhood. This attracting set is such that $\mathcal{A} \subseteq B_{1}(0)$.}
\end{itemize}
\paragraph{}
Observe that $Limsup_{c \to \infty} \{h_c (x)\} =$ 
$\underset{c \to \infty}{\lim} h_c (x)$ when 
$\underset{c \to \infty}{\lim} h_c (x)$ exists.
Recall the definition of $Limsup$, the upper-limit of a sequence of sets, from Section~\ref{definitions}.
It can be shown that if a recursion given by (\ref{eq:basicrecursion}) satisfies assumptions
$(BM1)(i)$ and $(BM1)(iii)$ then it also satisfies $(BM4)(i),(ii)$. Assumption $(BM4)$
unifies the two possible cases: when the limit of $h_c$, as $c \to \infty$, 
exists for each $x \in \mathbb{R}^d$ and when it does not.
\paragraph{}
We claim that a recursion given by (\ref{eq:basicrecursion}), satisfying assumptions
$(BM2),\ (BM3)$ and $(BM4)$ will also satisfy $(A1)-(A3)$, $(A6)$ and $(A5)$ 
(see section~\ref{AccuMainSec}). From Theorem~\ref{AccuMain} it follows that
the iterates are stable and converge to a closed, connected, internally chain transitive
and invariant set of $\dot{x}(t) = h(x(t))$. The following generalization
of the \textit{Borkar-Meyn Theorem} is a direct consequence of Theorem~\ref{AccuMain}.
\begin{corollary}[Generalized Borkar-Meyn Theorem]\label{GenBorkarMeyn}
 Under assumptions $(BM2)$, $(BM3)$ and $(BM4)$,
almost surely the sequence $\{x_{n}\}_{n \ge 0}$ generated by the
stochastic recursive equation (\ref{eq:basicrecursion}), is bounded
and converges to a closed, connected, internally chain transitive and invariant
set of $\dot{x}(t) = h(x(t))$.
\end{corollary}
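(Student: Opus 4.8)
The plan is to derive Corollary~\ref{GenBorkarMeyn} from Theorem~\ref{AccuMain} by verifying that the hypotheses $(BM2)$, $(BM3)$, $(BM4)$ on the stochastic recursive \emph{equation}~(\ref{eq:basicrecursion}) imply the hypotheses $(A1)$--$(A3)$, $(A5)$ (with $h_\infty(x):=\overline{co}(Limsup_{c\to\infty}h_c(x))$) and $(A6)$ of the set-valued framework, where the singleton-valued $h$ is viewed as a set-valued map $x\mapsto\{h(x)\}$. Once this reduction is in place, Theorem~\ref{AccuMain} gives stability and convergence to a closed, connected, internally chain transitive, invariant set of the inclusion $\dot{x}(t)\in\{h(x(t))\}$, which is exactly the o.d.e.\ $\dot{x}(t)=h(x(t))$, and the corollary follows.

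The routine implications I would dispatch first: $(BM2)\Rightarrow(A2)$ and $(BM3)\Rightarrow(A3)$ are immediate (with $K=L$). For $(A1)$, a Lipschitz function $h$ is trivially a Marchaud map when read as $x\mapsto\{h(x)\}$: each image is a singleton (hence convex and compact), point-wise boundedness with constant $K=L$ follows from $\|h(x)\|\le\|h(0)\|+L\|x\|$ (absorbing constants), and upper-semicontinuity is just continuity. Assumption $(A6)$ is exactly $(BM4)(ii)$ verbatim, with the understanding that the $h_\infty$ appearing there is $\overline{co}(Limsup_{c\to\infty}h_c(x))$ as defined in $(BM4)(i)$ --- so there is nothing to prove beyond noting the definitions coincide. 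The one substantive step is $(A5)$: given $c_n\uparrow\infty$, $x_n\to x$, $y_n\to y$ with $y_n\in h_{c_n}(x_n)=\{h(c_nx_n)/c_n\}$ --- i.e.\ $y_n=h(c_nx_n)/c_n$ --- I must show $y\in h_\infty(x)=\overline{co}(Limsup_{c\to\infty}h_c(x))$. I would argue as follows: $h_c$ is Lipschitz with constant $L$ for every $c\ge1$ (since $\|h(cx)/c-h(cx')/c\|\le L\|x-x'\|$), so $\|y_n-h(c_nx)/c_n\|=\|h_{c_n}(x_n)-h_{c_n}(x)\|\le L\|x_n-x\|\to0$; hence the sequence $z_n:=h_{c_n}(x)\in h_{c_n}(x)$ also converges to $y$, which by definition of $Limsup_{c\to\infty}h_c(x)$ (it collects accumulation points of selections along $c\to\infty$, and $c_n\uparrow\infty$) places $y\in Limsup_{c\to\infty}h_c(x)\subseteq h_\infty(x)$.

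The only genuine obstacle I anticipate is a bookkeeping subtlety in $(A5)$ / $(BM4)$: the definition of $Limsup_{c\to\infty}h_c(x)$ in Section~\ref{definitions} is phrased for a \emph{sequence} of sets $\{K_n\}$, whereas here $c$ ranges over a continuum; I would either restrict to integer $c$ (consistent with how $c_n$ is taken to be an increasing sequence of integers in $(A5)$) or note that since each $h_c(x)$ is a singleton, $Limsup_{c\to\infty}h_c(x)$ is just the set of subsequential limits of $h(c_nx)/c_n$ over integer $c_n\uparrow\infty$, and the argument above exhibits $y$ as such a limit. A secondary point worth a sentence is that $(BM4)(ii)$ already gives the attractor inside $B_1(0)$ with $\overline{B}_1(0)$ in a fundamental neighborhood, so I do not need the $(A4)'/(A6)'$ generalization of Remark~\ref{a4'}; but if one prefers the cleaner route via $\mathcal{A}$ being globally attracting and Lyapunov stable (as in $(BM1)(iii)$ for the original statement), I would invoke that remark to relax the ball condition. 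Finally I would remark, as the excerpt's preamble already asserts, that $(BM1)(i),(iii)\Rightarrow(BM4)(i),(ii)$ --- using that $\lim_{c\to\infty}h_c(x)=h_\infty(x)$ exists pointwise so $Limsup_{c\to\infty}h_c(x)=\{h_\infty(x)\}$ and $(BM1)(iii)$ makes the origin a global attractor in $B_1(0)$ with every compact set a fundamental neighborhood --- so Corollary~\ref{GenBorkarMeyn} does indeed contain the Borkar--Meyn Theorem as a special case.
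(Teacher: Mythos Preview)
Your proposal is correct and follows essentially the same route as the paper: reduce to Theorem~\ref{AccuMain} by checking $(A1)$--$(A3)$, $(A6)$ directly from $(BM2)$--$(BM4)$, and verify $(A5)$ via the uniform Lipschitz bound $\lVert h_{c_n}(x_n)-h_{c_n}(x)\rVert\le L\lVert x_n-x\rVert$ to force $h_{c_n}(x)\to y\in Limsup_{c\to\infty}h_c(x)\subseteq h_\infty(x)$. Your additional remarks on the integer-vs-continuum indexing of $c$ and on recovering the original Borkar--Meyn theorem are sound supplementary observations that the paper either leaves implicit or discusses elsewhere.
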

\begin{proof}
Assumptions $(A1)-(A3)$ and $(A6)$ follow directly from $(BM2)$, $(BM3)$ and
$(BM4)$. We show that $(A5)$ is also satisfied.
 Let $c_n \uparrow \infty$,
 $x_{n} \to x$, $y_n \to y$ and $y_n \in h_{c_{n}}(x_n)$ (here $y_n = h_{c_{n}}(x_n)$),
 $\forall \ n \ge 1$.
 It can be shown that 
 $\lVert h_{c_{n}}(x_{n}) - h_{c_{n}}(x) \rVert \le L \lVert x_n - x \rVert$.
 Hence we get that $h_{c_{n}}(x) \to y$. In other words,
 $\underset{c \to \infty}{\underline{\lim}} \lVert h_c (x) - y \rVert = 0$.
 Hence we have $y \in h_{\infty}(x)$. The claim now follows from Theorem~\ref{AccuMain}.
 \end{proof}
\section{Applications: The problem of approximate drifts \& stochastic gradient descent} \label{applications} 
\subsection{The problem of approximate drifts}\label{ApproximateDrift}
\paragraph{}
Let us recall the standard $SRE$:
\begin{equation}
 \label{eq:ODEIterate}
x_{n+1} = x_{n} + a(n) \left( h(x_{n}) + M_{n+1} \right),
\end{equation}
where $h: \mathbb{R}^{d} \longrightarrow \mathbb{R}^{d}$ is Lipschitz continuous, 
$\{a(n)\}_{n \ge 0}$ is the step-size sequence and $\{M_{n}\}_{n \ge 1}$
is the noise sequence.
\paragraph{}
The function $h$ is colloquially referred to as the drift. In many applications
the drift function cannot be calculated accurately.
This is referred to as the approximate drift problem.
For more details the reader is referred to \textit{Chapter 5.3} of \textit{Borkar} \cite{BorkarBook}.
Suppose the room for error is at most $\epsilon (> 0)$ then such 
an algorithm can be characterized by the following stochastic recursive inclusion:
\begin{equation}
 \label{eq:DIIterate}
 x_{n+1} = x_{n} + a(n) \left( y_{n} + M_{n+1} \right),
\end{equation}
 where $y_{n} \in h(x_n)  + \overline{B}_{\epsilon}(0)$ is an estimate of $h(x_{n})$ and
$\overline{B}_{\epsilon}(0)$ is the closed ball of radius $\epsilon$
around the origin. We define a new set-valued map called the
\textit{approximate drift} by $H(x) := h(x) + \overline{B}_{\epsilon}(0)$ for each $x \in 
\mathbb{R}^{d}$. In the following discussion we assume that $\epsilon \ge 0$. When
$\epsilon = 0$, the approximate drift algorithm described by (\ref{eq:DIIterate})
is really the $SRE$ given by (\ref{eq:ODEIterate}).
\paragraph{}
In this section we show the following: If
($\ref{eq:ODEIterate}$) satisfies $(BM2), (BM3)$ and $(BM4)$ 
then the corresponding approximate drift version given by ($\ref{eq:DIIterate}$)
satisfies $(A1)-(A5)$. 
For details on $(BM2)$ and $(BM3)$ see Section~\ref{BMRelax}; see Section~\ref{furtherBM}
for $(BM4)$.
We then invoke Theorem~\ref{AccuMain} to conclude that the iterates
converge to a closed, connected, internally chain transitive and invariant set associated
with $\dot{x}(t) \in h(x(t)) + \overline{B}_{\epsilon}(0) ( = H(x(t)))$. 
\paragraph{}
\textit{\textbf{For the remainder
of this section it is assumed that 
($\ref{eq:ODEIterate}$) satisfies $(BM2), (BM3)$ and $(BM4)$.}}
\begin{proposition}\label{pa3}
 $H(x) = h(x)+ \overline{B}_{\epsilon}(0)$ is a Marchaud map. Further, recursion
 (\ref{eq:DIIterate}) satisfies $(A1), \ (A2)$ and $(A3)$.
\end{proposition}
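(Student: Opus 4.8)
The plan is to verify the three defining conditions of a Marchaud map for $H$, and then read off $(A1)$–$(A3)$ for the recursion \eqref{eq:DIIterate}. The key observation is that $H$ is the Minkowski sum of the singleton-valued Lipschitz map $x \mapsto \{h(x)\}$ (which is trivially a Marchaud map, being single-valued, continuous, with linear growth coming from Lipschitz continuity plus $\lVert h(0)\rVert$) and the constant set-valued map $x \mapsto \overline{B}_{\epsilon}(0)$. Minkowski addition of set-valued maps preserves the Marchaud properties, and I would simply carry this out by hand rather than invoking an abstract lemma.

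\textbf{Convexity and compactness of $H(x)$.} For each fixed $x$, $h(x)$ is a single point and $\overline{B}_\epsilon(0)$ is convex and compact, so $H(x) = h(x) + \overline{B}_\epsilon(0)$ is a translate of a convex compact set, hence convex and compact.

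\textbf{Point-wise boundedness.} For $w \in H(x)$ write $w = h(x) + b$ with $\lVert b \rVert \le \epsilon$. Then $\lVert w \rVert \le \lVert h(x) \rVert + \epsilon \le \lVert h(0) \rVert + L\lVert x \rVert + \epsilon \le K_H(1 + \lVert x \rVert)$ with $K_H := \max\{L, \lVert h(0)\rVert + \epsilon\}$ (or simply $L + \lVert h(0)\rVert + \epsilon$), which gives the required linear growth.

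\textbf{Upper semicontinuity.} Suppose $x_n \to x$, $y_n \to y$ and $y_n \in H(x_n)$, so $y_n = h(x_n) + b_n$ with $\lVert b_n \rVert \le \epsilon$. Since $h$ is (Lipschitz) continuous, $h(x_n) \to h(x)$, hence $b_n = y_n - h(x_n) \to y - h(x) =: b$, and $\lVert b \rVert \le \epsilon$ by closedness of $\overline{B}_\epsilon(0)$. Therefore $y = h(x) + b \in H(x)$, establishing upper semicontinuity (equivalently, closedness of the graph).

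This proves $H$ is Marchaud, which is exactly $(A1)$ for \eqref{eq:DIIterate}. Assumption $(A2)$ is immediate because \eqref{eq:DIIterate} uses the same step-size sequence as \eqref{eq:ODEIterate}, which satisfies $(BM2)$, and $(BM2)$ is identical to $(A2)$ (including the normalization $\sup_n a(n) \le 1$). Assumption $(A3)$ holds because \eqref{eq:DIIterate} uses the same martingale difference noise $\{M_n\}$ as \eqref{eq:ODEIterate}: the filtration and the conditional second-moment bound $E[\lVert M_{n+1}\rVert^2 \mid \mathcal{F}_n] \le L(1 + \lVert x_n\rVert^2)$ from $(BM3)$ transfer verbatim; one may take the common constant $K$ in $(A1)$–$(A3)$ to be $\max\{K_H, L\}$. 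I do not anticipate a genuine obstacle here — the only mild subtlety is bookkeeping the growth constants so that a single $K$ serves both the point-wise boundedness of $H$ and the noise bound, as the paper's convention requires; everything else is routine.
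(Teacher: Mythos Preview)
Your proof is correct and follows essentially the same route as the paper: verify convexity/compactness of the translate $h(x)+\overline{B}_\epsilon(0)$, obtain linear growth via $\lVert h(x)\rVert \le \lVert h(0)\rVert + L\lVert x\rVert$, prove upper semicontinuity by writing $y_n = h(x_n)+b_n$ and using continuity of $h$ to force $b_n \to b \in \overline{B}_\epsilon(0)$, and then read off $(A2)$, $(A3)$ directly from $(BM2)$, $(BM3)$. Even the choice of constant $K_H = \max\{L,\lVert h(0)\rVert+\epsilon\}$ matches the paper's $K := (\lVert h(0)\rVert+\epsilon)\vee L$.
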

\begin{proof}
Since $\overline{B}_{\epsilon}(0)$ is convex and compact, it follows that
$H(x)$ is convex and compact for each $x \in \mathbb{R}^d$. 
Fix $x \in \mathbb{R}^{d}$ and $y \in H(x)$,
then $\lVert y \rVert \le \lVert h(x) \rVert + \epsilon$ and
$\lVert y \rVert \le \lVert h(0) \rVert$ + $L \lVert x - 0 \rVert$ $+ \ \epsilon$
since $h$ is Lipschitz continuous with Lipschitz constant $L$. If we set 
$K := \left( \lVert h(0) \rVert + \epsilon \right) \vee L$, then we get
$\lVert y \rVert \le K \left(1+ \lVert x \rVert \right)$.
This shows that $H$ is point-wise bounded.
To show the upper semi-continuity of $H$ assume $\underset{n \to \infty}{\lim}$
$x_{n} = x$,  $\underset{n \to \infty}{\lim}$ $y_n = y$ and $y_n \in H(x_n)$
for each $n \ge 1$. For all $n \ge 1$, $y_n  =  h(x_n) + z_n$ for some
$z_{n} \in \overline{B}_{\epsilon}(0)$. Further, $h(x_n) \to h(x)$ as $x_n \to x$.
Since both $\{y_n\}_{n \ge 1}$
and $\{h(x_n)\}_{n \ge 1}$ are convergent sequences, $\{z_n\}_{n \ge 1}$
is also convergent. Let $z :=$ $\underset{n \to \infty}{\lim}\ z_n$; 
$z$ is such that $z \in \overline{B}_{\epsilon}(0)$  
since $\overline{B}_{\epsilon}(0)$ is compact. Taking limits on both sides of
$y_n  =  h(x_n) + z_n$, we get $y  =  h(x) + z$. Thus $y \in H(x)$.
\paragraph{}
Since (\ref{eq:DIIterate}) is assumed to satisfy $(BM2)$ and $(BM3)$ 
it trivially follows that it satisfies $(A2)$ and $(A3)$.
\end{proof}
\paragraph{}
Before showing that (\ref{eq:DIIterate}) satisfies $(A4)$, we construct the 
following family of set-valued maps:
\begin{equation} \label{Hc}
 H_{c}(x) := \left\{ \frac{h(cx)}{c} \ + \ \frac{y}{c} \ | \ y \ \in \ \overline{B}_{\epsilon}(0) \right\},
\end{equation}
where $c \ge 1$ and $x \in \mathbb{R}^d$.
In other words, $H_{c}(x) = h_{c}(x)+ \overline{B}_{\epsilon / c}(0)$ for each
$x \in \mathbb{R}^{d}$. 
\begin{proposition}\label{pa4}
 (\ref{eq:DIIterate}) satisfies (A6).
\end{proposition}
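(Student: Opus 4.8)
The plan is to reduce assumption $(A6)$ for the approximate-drift recursion (\ref{eq:DIIterate}) to the hypothesis $(BM4)(ii)$, which (\ref{eq:ODEIterate}) is assumed to satisfy, by showing that the ``infinity map'' $H_\infty$ built from $H$ coincides with the infinity map $h_\infty$ built from $h$. Recall that in the sense of Section~\ref{AccuMainSec} applied to the Marchaud map $H$, one sets $H_c(x) := \{y \mid cy \in H(cx)\}$ and $H_\infty(x) := \overline{co}\left( Limsup_{c \to \infty} H_c(x) \right)$. By (\ref{Hc}) we already have $H_c(x) = h_c(x) + \overline{B}_{\epsilon/c}(0)$ for every $c \ge 1$ and $x \in \mathbb{R}^d$, where $h_c(x) = h(cx)/c$.

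\paragraph{}
The key step is to show $Limsup_{c \to \infty} H_c(x) = Limsup_{c \to \infty} h_c(x)$ for each fixed $x$. To this end I would prove the estimate
\begin{equation}\nonumber
 \left| \, d(y, H_c(x)) - d(y, h_c(x)) \, \right| \le \frac{\epsilon}{c}, \qquad y \in \mathbb{R}^d,\ c \ge 1 .
\end{equation}
One inequality is immediate from $h_c(x) \subseteq H_c(x)$, which gives $d(y, H_c(x)) \le d(y, h_c(x))$. For the other, every $w \in H_c(x)$ can be written $w = u + v$ with $u \in h_c(x)$ and $\lVert v \rVert \le \epsilon/c$, so $\lVert y - u \rVert \le \lVert y - w \rVert + \epsilon/c$; taking the infimum over $w \in H_c(x)$ yields $d(y, h_c(x)) \le d(y, H_c(x)) + \epsilon/c$. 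Letting $c \to \infty$ along any sequence, the two point-to-set distances have the same $\underline{\lim}$, so by the definition of the upper limit of a sequence of sets (Section~\ref{definitions}) the two $Limsup$ sets are equal. Taking closed convex hulls gives $H_\infty(x) = h_\infty(x)$ for every $x \in \mathbb{R}^d$.

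\paragraph{}
To finish, I would invoke the standing hypothesis that (\ref{eq:ODEIterate}) satisfies $(BM4)$: by $(BM4)(ii)$ the differential inclusion $\dot{x}(t) \in h_\infty(x(t))$ has an attracting set $\mathcal{A} \subseteq B_1(0)$ with $\overline{B}_1(0)$ contained in a fundamental neighborhood of $\mathcal{A}$. Since $H_\infty \equiv h_\infty$ by the previous step, the inclusion $\dot{x}(t) \in H_\infty(x(t))$ is literally the same differential inclusion, hence it has the same attracting set $\mathcal{A}$ and the same fundamental neighborhood. This is exactly assertion $(A6)$ for (\ref{eq:DIIterate}).

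\paragraph{}
I do not expect a genuine obstacle here; the only point needing care is the Hausdorff-type estimate above, i.e.\ verifying rigorously that appending the vanishing ball $\overline{B}_{\epsilon/c}(0)$ leaves the upper limit of the set sequence unchanged, which amounts to handling the point-to-set distances and the $\underline{\lim}$ in the definition of $Limsup$ correctly. Everything else is bookkeeping that transfers the conclusion of $(BM4)(ii)$ verbatim.
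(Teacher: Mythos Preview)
Your proposal is correct and follows the same strategy as the paper: reduce $(A6)$ to $(BM4)(ii)$ by proving $H_\infty \equiv h_\infty$, i.e.\ that the shrinking perturbation $\overline{B}_{\epsilon/c}(0)$ leaves the upper limit unchanged. The only cosmetic difference is that you package both inclusions into a single Hausdorff-type estimate $|d(y,H_c(x)) - d(y,h_c(x))| \le \epsilon/c$ and pass to the $\underline{\lim}$, whereas the paper argues each inclusion separately by chasing sequences $y_n \in H_{c_n}(x)$ with $\lVert y_n - h_{c_n}(x)\rVert \le \epsilon/c_n$; the content is identical.
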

\begin{proof}
To prove this it is enough to show that 
$H_\infty(x) = h_\infty(x)$, where $H_\infty (x) :=Limsup_{c \to \infty} H_{c}(x)$
and $h_\infty(x) :=Limsup_{c \to \infty} h_{c}(x)$.
Since $\dot{x}(t) \in h_\infty (x(t))$ satisfies $(BM4)(ii)$ it trivially follows that $(A6)$
is satisfied by (\ref{eq:DIIterate}). Note that $(BM4)(ii)$ and $(A6)$ essentially say the
same thing.
\paragraph{}
First we show $h_\infty (x) \subseteq H_\infty (x)$ for every $x \in \mathbb{R}^d$.
Let $y \in h_\infty(x)$, $\exists c_n \uparrow \infty$ such that $h_{c_n} \to y$ as
$c_n \uparrow \infty$. Since $h_{c_n}(x) \in H_{c_n}(x)$ it follows 
from the definition of $Limsup$ that $y \in H_\infty (x)$. To show $H_\infty(x) \subseteq h_\infty(x)$
we start by assuming the negation \textit{i.e.,} 
for some $x \in \mathbb{R}^d$ $\exists y \in H_\infty (x)$ such that 
$y \notin h_\infty(x)$. Let $c_n \uparrow \infty$ and $y_n \in H_{c_n}(x_n)$ such that 
$\underset{c_n \uparrow \infty}{\lim} y_n = y$. Since $\lVert y_n - h_{c_n}(x_n) \rVert \le 
\frac{\epsilon}{c_n}$ we have $\underset{c_n \uparrow \infty}{\lim} h_{c_n}(x_n) = y$.
We have the following:
\[
 \lVert y - h_{c_n}(x) \rVert \le \lVert y - h_{c_n}(x_n) \rVert +
 \lVert h_{c_n}(x_n) - h_{c_n}(x) \rVert .
\]
Taking limits on both sides we get that $\lVert y - h_{c_n}(x) \rVert \to 0$ \textit{i.e.,}
$y \in h_\infty(x)$. This is a contradiction.
  \end{proof}
  \paragraph{}
\begin{proposition} \label{pa5}
 (\ref{eq:DIIterate}) satisfies (A5).
\end{proposition}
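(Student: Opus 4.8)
The plan is to argue directly from the statement of $(A5)$, reusing the same computation that appears in the proof of Corollary~\ref{GenBorkarMeyn}. Fix an increasing integer sequence $c_n \uparrow \infty$ together with $x_n \to x$ and $y_n \to y$ satisfying $y_n \in H_{c_n}(x_n)$ for every $n$; here $(A5)$ is read with $h_\infty$ replaced by $H_\infty(x) := \overline{co}\big(Limsup_{c \to \infty} H_c(x)\big)$, as in Theorem~\ref{AccuMain}. The goal is to show $y \in H_\infty(x)$, and the idea is simply to produce a point of $H_{c_n}(x)$ converging to $y$.

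First I would strip off the perturbation. By the definition of $H_c$ in (\ref{Hc}) we have $H_{c_n}(x_n) = h_{c_n}(x_n) + \overline{B}_{\epsilon/c_n}(0)$, so $y_n = h_{c_n}(x_n) + z_n$ for some $z_n$ with $\lVert z_n \rVert \le \epsilon / c_n \to 0$; hence $h_{c_n}(x_n) \to y$. Next I would move the argument from $x_n$ to the fixed point $x$ using the Lipschitz bound: as recalled in Section~\ref{BMRelax}, $h$ Lipschitz with constant $L$ forces each $h_c$ ($c \ge 1$) to be Lipschitz with the same constant $L$, so $\lVert h_{c_n}(x_n) - h_{c_n}(x) \rVert \le L \lVert x_n - x \rVert \to 0$, and therefore $h_{c_n}(x) \to y$ as well.

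Finally, since $h_{c_n}(x) \in H_{c_n}(x)$ for every $n$, the convergence $h_{c_n}(x) \to y$ yields $d\big(y, H_{c_n}(x)\big) \le \lVert y - h_{c_n}(x) \rVert \to 0$, so $y \in Limsup_{c \to \infty} H_c(x) \subseteq \overline{co}\big(Limsup_{c \to \infty} H_c(x)\big) = H_\infty(x)$, which is precisely what $(A5)$ demands. (Equivalently, $h_{c_n}(x) \to y$ already places $y$ in $h_\infty(x)$, and one may then invoke the identity $H_\infty = h_\infty$ from Proposition~\ref{pa4}.) I do not expect any genuine obstacle here: every step is an immediate consequence of the uniform-in-$c$ Lipschitz bound on $h_c$ and of $\overline{B}_{\epsilon/c}(0)$ shrinking to the origin. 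The only point requiring a word of care is that the passage from $Limsup$ to its closed convex hull is harmless, since $y$ is shown to lie in the $Limsup$ itself.
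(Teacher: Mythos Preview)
Your argument is correct and matches the paper's proof essentially step for step: strip off the $\overline{B}_{\epsilon/c_n}(0)$ perturbation to get $h_{c_n}(x_n)\to y$, use the uniform Lipschitz constant of $h_{c_n}$ to pass to $h_{c_n}(x)\to y$, and conclude $y\in H_\infty(x)$. The only cosmetic difference is that the paper phrases the last step as $y\in h_\infty(x)$ followed by the identity $H_\infty=h_\infty$ from Proposition~\ref{pa4}, which you already note as an equivalent alternative.
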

 \begin{proof}
  Given $c_n \uparrow \infty$, $x_n \to x$, $y_n \to y$ and $y_n \in H_{c_n}(x_n)$ 
  $\forall n$, we need to show that $y \in H_{\infty}(x)$. As in the proof
  of Proposition~\ref{pa4} we have $\underset{c_n \uparrow \infty}{\lim} h_{c_n}(x_n) = y$.
  Since $\lVert h_{c_n}(x_n) - h_{c_n}(x) \rVert \le L \lVert x_n - x \rVert$ we have
  that $\underset{c_n \uparrow \infty}{\lim} \lVert h_{c_n}(x_n) - h_{c_n}(x) \rVert = 0$
  and $\underset{c_n \uparrow \infty}{\lim} h_{c_n}(x) = y$. In other words,
  $y \in h_\infty(x)$. In Proposition~\ref{pa3} we have shown that $h_\infty \equiv H_\infty$
  therefore $y \in H_\infty(x)$. 
   \end{proof}
\paragraph{}
\begin{corollary} \label{bmcorr}
 If a \textit{SRE}, given by (\ref{eq:ODEIterate}), satisfies $(BM2), (BM3)$ and $(BM4)(i),(ii)$ 
 then  the corresponding approximate drift version given by (\ref{eq:DIIterate}) is stable
 almost surely. In addition, it converges to a closed, connected, invariant and
 internally chain transitive set of $\dot{x}(t) \in H(x(t))$, where $H(x) = h(x) + 
 \overline{B}_\epsilon(0)$.
\end{corollary}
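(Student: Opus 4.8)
The plan is to assemble Propositions~\ref{pa3}, \ref{pa4} and \ref{pa5} and then invoke Theorem~\ref{AccuMain}. Recall that Theorem~\ref{AccuMain} asserts exactly the desired conclusion --- almost sure boundedness together with convergence to a closed, connected, internally chain transitive and invariant set of the ``original'' differential inclusion --- for any stochastic recursive inclusion satisfying $(A1)$--$(A3)$, $(A5)$ read with $h_\infty(x):=\overline{co}\!\left(Limsup_{c\to\infty}h_c(x)\right)$, and $(A6)$. So the entire task is to check that the approximate-drift recursion (\ref{eq:DIIterate}), whose mean field is $H(x)=h(x)+\overline{B}_\epsilon(0)$, meets precisely this list under the standing hypotheses $(BM2)$, $(BM3)$, $(BM4)(i),(ii)$.

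First I would cite Proposition~\ref{pa3} for $(A1)$, $(A2)$ and $(A3)$: it shows $H$ is a Marchaud map (convexity and compactness of $H(x)$ from those of $\overline{B}_\epsilon(0)$, point-wise boundedness with the single constant $K=(\lVert h(0)\rVert+\epsilon)\vee L$ from Lipschitz continuity of $h$, and upper semicontinuity by a straightforward limiting argument), while $(A2)$ and $(A3)$ are simply $(BM2)$ and $(BM3)$ verbatim --- the ``without loss of generality'' clause in $(A3)$ letting the same $K$ serve for both the point-wise bound and the conditional second-moment bound. Next I would invoke Proposition~\ref{pa4}, which establishes $Limsup_{c\to\infty}H_c(x)=Limsup_{c\to\infty}h_c(x)$ for every $x$ (with $H_c(x)=h_c(x)+\overline{B}_{\epsilon/c}(0)$), hence $\overline{co}\!\left(Limsup_{c\to\infty}H_c(x)\right)=h_\infty(x)$; consequently the attracting-set hypothesis $(BM4)(ii)$, stated for $\dot{x}(t)\in h_\infty(x(t))$, is literally $(A6)$ for (\ref{eq:DIIterate}). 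Finally Proposition~\ref{pa5} supplies $(A5)$ in its modified ($Limsup$) form for the rescaled maps $H_{c_n}$.

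Having verified $(A1)$--$(A3)$, $(A5)$ and $(A6)$, I would close by applying Theorem~\ref{AccuMain} to (\ref{eq:DIIterate}), obtaining that $\{x_n\}$ is bounded almost surely and converges to a closed, connected, internally chain transitive and invariant set of $\dot{x}(t)\in H(x(t))=h(x(t))+\overline{B}_\epsilon(0)$. I do not anticipate a genuine obstacle: this is a packaging corollary and all of the substantive work --- the Marchaud property of $H$ and of $H_c$, the identity $H_\infty\equiv h_\infty$, and the closure property $(A5)$ --- is already discharged by Propositions~\ref{pa3}--\ref{pa5}. The only point needing a moment's care is ensuring that the $(A5)$ actually proved (Proposition~\ref{pa5}) is stated with $H_\infty(x):=\overline{co}\!\left(Limsup_{c\to\infty}H_c(x)\right)$, the form Theorem~\ref{AccuMain} demands, rather than with the $Liminf$-based $h_\infty$ of $(A4)$; since Proposition~\ref{pa5} is phrased for exactly that $H_\infty$, the match is immediate.
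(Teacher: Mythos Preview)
Your proposal is correct and mirrors the paper's own proof essentially verbatim: the paper simply cites Propositions~\ref{pa3}, \ref{pa4} and \ref{pa5} to conclude that $(A1)$--$(A3)$, $(A5)$ and $(A6)$ hold, and then invokes Theorem~\ref{AccuMain}. Your additional remark about matching the $Limsup$-form of $(A5)$ to Theorem~\ref{AccuMain}'s hypotheses is a nice clarification but not something the paper spells out.
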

\begin{proof}
In Propositions~\ref{pa3}, \ref{pa4} and \ref{pa5} we have shown that (\ref{eq:ODEIterate})
 satisfies $(A1) - (A3), \ (A5), \ (A6)$;
 the statement now follows directly from Theorem~\ref{AccuMain}.   
 \end{proof}
\paragraph{}
\begin{remark}
In the context of (\ref{eq:ODEIterate}), we have that $h$ is Lipschitz and
$h_c:x \mapsto \frac{h(cx)}{c}$. Supposing $\underset{c \to \infty}{\lim} h_c (x)$
exists for every $x \in \mathbb{R}^d$ (see $(BM1)(i)$ in Section~\ref{GenBorkarMeynSec}) then
$\underset{c \to \infty}{\lim} h_c (x) =$
$Limsup_{c \to \infty} \{h_c(x)\}$.
Further, $Limsup_{c \to \infty} \{h_c(x)\}$ is non-empty for every $x \in \mathbb{R}^d$
(since $h_c (x) \le K(1 + \lVert x \rVert)$, $c \ge 1$), even if 
$\underset{c \to \infty}{\lim} h_c (x)$ does not exist for some $x \in \mathbb{R}^d$. 
Hence the analysis of the approximate drift problem in
this section is all encompassing. The aforementioned
is also the reason why in Section~\ref{furtherBM} we define 
$h_{\infty}(x) := \overline{co} (Limsup_{c \to \infty} \{h_c(x)\})$. It may be noted 
that we use
$Limsup_{c \to \infty} \{h_c(x)\}$ instead of $Limsup_{c \to \infty} h_c(x)$ since $Limsup$
acts on sets and $h$ (in this context) is a function that is not set-valued.
Finally, in Corollary~\ref{bmcorr} if we let
$\epsilon = 0$ then we may derive Corollary~\ref{GenBorkarMeyn}.
\end{remark}
\newpage
\subsection{Stochastic gradient descent}\label{sgdsec}
\paragraph{}
Stochastic gradient descent is a gradient descent optimization technique to find
the minimum set of a (continuously) differentiable function. Suppose we want to
find the minimum of $F: \mathbb{R}^d \to \mathbb{R}$ for which we can run
the following $SRE$:
\begin{equation} \label{sgd}
 x_{n+1}  = x_n - a(n) [\nabla F(x_n) + M_{n+1}],
\end{equation}
where $\nabla F : \mathbb{R}^d \to \mathbb{R}^d$ is upper-semicontinuous and
$\lVert \nabla F(x) \rVert \le K(1 + \lVert x \rVert)$ $\forall x \in \mathbb{R}^d$
(point-wise bounded).
$\{a(n)\}_{n \ge 0}$ is the  given step size sequence and $\{M_{n+1}\}_{n \ge 0}$
is the martingale difference noise sequence. If the assumptions
of Bena\"{i}m, Hofbauer and Sorin \cite{Benaim05} are satisfied by (\ref{sgd})
then the iterates converge to a closed, connected, 
internally chain transitive and invariant set of $\dot{x}(t) = - \nabla F(x(t))$
which is also the minimum set of $F$. In this section we shall not
distinguish between the asymptotic attracting set of $\dot{x}(t) = - \nabla F(x(t))$
and the minimum set of $F$.
\paragraph{}
As explained in Section~\ref{introduction}, while implementing (\ref{sgd}) one can only hope to calculate an approximate value
of the gradient at each step. However, one has control over the ``approximation error''.
This is typical when gradient estimators with fixed perturbation parameters are used, 
it could also be a consequence of the inherent computational capability of the 
computer used to run the algorithm. In reality one is running the following $SRI$:
\begin{equation} \label{sgdi}
 x_{n+1} = x_n + a(n) [y_n + M_{n+1}],
\end{equation}
where $y_n \in - \nabla F(x_n) + \overline{B}_ \epsilon (0)$ and $\epsilon > 0$
is the ``approximation error''. The following questions are natural:
\begin{enumerate}
 \item Are the iterates stable?
 \item If so, where do they converge?
\end{enumerate}
Define the following set valued map, $H: x \mapsto - \nabla F(x) + \overline{B}_ \epsilon (0)$.
 As in (\ref{Hc}) we define $H_c (x) :=  \frac{- \nabla F(cx)}{c} + \overline{B}_{\epsilon / c} (0)$
 and $H_\infty (x) := Limsup_{c \to \infty} H_c (x)$ = $Limsup_{c \to \infty} 
 \left\{ \frac{- \nabla F(cx)}{c} \right\}$. Recall the definition of $Limsup$
 from Section~\ref{definitions}.
 \begin{proposition}
  (\ref{sgdi}) satisfies $(A1)$  \textit{i.e.,} $H$ is a marchaud map.
 \end{proposition}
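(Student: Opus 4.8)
The plan is to verify the three defining properties of a Marchaud map for $H(x) := -\nabla F(x) + \overline{B}_\epsilon(0)$ directly, exactly as was done for the approximate-drift map in Proposition~\ref{pa3}. First I would note that for each fixed $x$, the set $H(x)$ is the translate of the convex compact ball $\overline{B}_\epsilon(0)$ by the single vector $-\nabla F(x)$, hence $H(x)$ is convex and compact. For point-wise boundedness, I would take any $y \in H(x)$, write $y = -\nabla F(x) + z$ with $z \in \overline{B}_\epsilon(0)$, and use the hypothesis $\lVert \nabla F(x) \rVert \le K(1 + \lVert x \rVert)$ together with $\lVert z \rVert \le \epsilon$ to get $\lVert y \rVert \le K(1 + \lVert x \rVert) + \epsilon \le K'(1 + \lVert x \rVert)$ for $K' := K + \epsilon$; so the point-wise bound holds with a (possibly enlarged) constant.

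The only part requiring a little care is upper semicontinuity. Here I would take sequences $x_n \to x$, $y_n \to y$ with $y_n \in H(x_n)$, write $y_n = -\nabla F(x_n) + z_n$ with $z_n \in \overline{B}_\epsilon(0)$, and argue as in Proposition~\ref{pa3}: since $\nabla F$ is upper-semicontinuous (and single-valued), $\nabla F(x_n) \to \nabla F(x)$, so $z_n = y_n + \nabla F(x_n)$ converges to some $z = y + \nabla F(x)$; because $\overline{B}_\epsilon(0)$ is closed, $z \in \overline{B}_\epsilon(0)$, whence $y = -\nabla F(x) + z \in H(x)$. This gives closedness of the graph, i.e. upper semicontinuity. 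One subtlety worth a remark is that ``$\nabla F$ upper-semicontinuous'' as a single-valued map should be read as continuity of $\nabla F$ (a single-valued u.s.c. map is continuous), which is what makes $\nabla F(x_n) \to \nabla F(x)$ legitimate; if the paper really intends $\nabla F$ to possibly be a set-valued generalized gradient, the same argument still works by passing to a convergent subsequence of selections, but I expect the intended reading is the continuous one.

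The main obstacle, such as it is, is purely bookkeeping: making sure the constant $K$ in $(A1)(ii)$ is stated consistently with the $K$ appearing in $(A3)$ (the paper's convention is that a single $K$ serves both), so I would either absorb $\epsilon$ into $K$ at the outset or explicitly note $K' = K + \epsilon$ works for both the boundedness of $H$ and the noise bound. Everything else is routine. I would keep the proof to three short paragraphs mirroring Proposition~\ref{pa3}, and then presumably continue (in subsequent propositions, not this one) to check $(A2)$, $(A3)$, $(A5)$ and $(A6)$ for the recursion~(\ref{sgdi}) by the same reductions used in the approximate-drift section, ultimately invoking Theorem~\ref{AccuMain}.
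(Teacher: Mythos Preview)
Your proposal is correct and essentially matches the paper's approach. The only minor difference is that the paper's written proof spells out just the upper-semicontinuity step and does it via a subsequence argument (extract a convergent subsequence of $\{\nabla F(x_n)\}$ using point-wise boundedness, then use the closed-graph property to identify the limit as $\nabla F(x)$), whereas you invoke continuity of $\nabla F$ directly; since a single-valued, locally bounded map with closed graph is continuous, your shortcut is justified, and you flag precisely this subtlety in your remark.
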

 \begin{proof} 
  Given $x_n \to x$, $y_n \to y$ and $y_n \in H(x_n) \ \forall n$, we need to show that
  $y \in H(x)$. For each $n$ we have $y_n = - \nabla F(x_n) + z_n$,
  where $z_n \in \overline{B}_\epsilon(0)$.
  Since $\nabla F$ is point-wise bounded, it follows that
  $\{- \nabla F(x_n)\}$ is a bounded sequence. Let $\{n(m)\} \subseteq \mathbb{N}$ such that
  $\nabla F(x_{n(m)}) \to \nabla F(x)$, $y_{n(m)} \to y$. The subsequence $ z_{n(m)} \to z$
  for some $z \in \overline{B}_\epsilon (0)$ \textit{i.e.,}
  \[
  \left( -\nabla F(x_{n(m)}) + z_{n(m)} \right) \to \left(-\nabla F(x) + z \right) \in H(x).
  \]
   \end{proof}
   \paragraph{}
If in addition to $(A1)$, equation (\ref{sgdi}) also satisfies $(A2), (A3), (A5)$ and
$(A6)$ then it follows from Theorem~\ref{AccuMain} that the iterates are stable and
converge to a closed, connected, 
internally chain transitive and invariant set of $\dot{x}(t) \in \left( 
- \nabla F(x(t)) + \overline{B}_\epsilon (0) \right)$. 
\paragraph{}
\textbf{\textit{Suppose $F$ has the quadratic form $x^T A x + Bx + c$, where $A$ is a positive definite matrix, 
$B$ is some matrix and $c$ is some vector. Then it can be shown that
$(A1), \ (A2), (A3), (A5)$ and $(A6)$ are satisfied by (\ref{sgdi}) and the iterates are
stable and converge to a closed, connected, 
internally chain transitive and invariant set of $\dot{x}(t) \in 
- (A x(t) + B) + \overline{B}_\epsilon (0) $. If the comments in Remark \ref{a4'} are
incorporated \textit{i.e.,} we use $(A6)'$ instead of $(A6)$
then matrix $A$ need not be positive definite anymore.}}
\paragraph{}
For the purpose of this discussion assume that $\nabla F$ is Lipschitz continuous.
The graph of a set-valued map $H:\mathbb{R}^d \to \{ \text{subsets of }\mathbb{R}^d \}$ 
is given by $Graph(H) = \{(x,y) \ | \ x \in \mathbb{R}^d, \ y \in H(x)\}$. It is easy
to see that $Graph(- \nabla F + \overline{B}_\epsilon (0)) \subseteq
N^{2 \epsilon} \left( Graph(-\nabla F) \right)$.
Let us also assume that $\mathcal{A}$ is the global attractor (minimum set of $F$) of 
$\dot{x}(t) = - \nabla F(x(t))$ then 
every compact subset of $\mathbb{R}^d$ is its fundamental neighborhood.
It follows from the stability of the iterates that they will remain within a compact subset,
say $\mathcal{U}$, that may be sample path dependent.
It follows from Theorem 2.1 of
Bena\"{i}m, Hofbauer and Sorin \cite{Benaim12} that for all $\delta > 0$
there exists $\epsilon > 0$ such that $\mathcal{A}^\delta \subseteq N^\delta (\mathcal{A})$ 
is the attractor set of $\dot{x}(t) \in - \nabla F(x(t)) + \overline{B}_\epsilon (0)$.
Further, the fundamental neighborhood of $\mathcal{A}^\delta$ is $\mathcal{U}$ itself.
In other words, suppose we want to ensure convergence of the iterates to a $\delta-neighborhood$ of
the minimum set $\mathcal{A}$ then the ``approximation error'' should be
at most $\epsilon$ ($\epsilon$ is dependent on $\delta$).
\section{Final discussion on the generality of our framework}\label{finaldisc}
\paragraph{}
As explained in Section~\ref{sec3}, we run a projective scheme to show stability.
In other words, time is divided into intervals of length $T$; the iterates are checked
at the
beginning of each time interval to see if they are outside the unit ball;
all the iterates corresponding to $[T_n , T_{n+1} )$ are scaled by $r(n) =\lVert x(T_n) \rVert \vee 1$
\textit{i.e.,} the iterates are projected onto the unit ball around the origin.
For $t(m(n)) = T_n \le t(m(n) +k) < T_{n+1}$ we have the following re-scaled iterate:
\[
\frac{\overline{x}(t(m(n) +k))}{r(n)} = \frac{\overline{x}(t(m(n)))}{r(n)} +
\sum_{j=0}^{k-1} a(m(n) + j) 
\left[ \frac{\overline{y}(t(m(n) +j))}{r(n)} + \frac{M_{m(n)+j+1}}{r(n)} \right].
\]
In the above, $\frac{\overline{y}(t(m(n) +j))}{r(n)} \in \ 
h_{r(n)} \left( \frac{\overline{x}(t(m(n) +j))}{r(n)} \right)$. 
Since we have to worry about $r(n)$ running off to infinity it is natural
to define $h_\infty(x)$ to include all accumulation points of $\{h_c(x) \mid c \ge 1, \ c \to \infty\}$.
This is precisely what the $Limsup$ function (see Section~\ref{definitions}) allows us to do.
In Lemma~\ref{closertoode} it was shown that
the scaled iterates track a solution to $\dot{x}(t) \in h_\infty (x(t))$ provided the original iterates
are unstable \textit{i.e.,} $\underset{n}{\sup}\ r(n) = \infty$. Assumptions
$(A4)/(A6)$ were never used up to this point. At this stage it seems natural to
impose restrictions on $\overline{x}(t) \in h_\infty (x(t))$ to elicit the stability
of the original iterates.
\paragraph{}
As explained in Section~\ref{AccuMainSec} $Limsup_{c \to \infty} h_c(x)$ is non-empty
for every $x \in \mathbb{R}^d$ since $h$ is point-wise bounded.
Further, $h_\infty \equiv \overline{co} \left( Limsup_{c \to \infty} h_c \right)$
is shown to be Marchaud and the $DI$ $\dot{x}(t) \in h_\infty (x(t))$ 
has at least one solution. Assumption $(A6)$ is the restriction referred to 
in the previous paragraph that is imposed
to elicit the stability of the original iterates. On a related note, 
if $Liminf_{c \to \infty} h_c$ were non-empty, then we define 
$h_\infty \equiv \overline{Liminf_{c \to \infty} h_c}$ and check if $(A4)$ is satisfied.
\paragraph{}
If the $DI$ $\dot{x}(t) \in h_\infty (x(t))$ has global attractor inside $B_1(0)$,
then this is a sufficient condition for $(A6)$ to hold,
it then follows from Theorem~\ref{AccuMain} that the original iterates are stable
and converge to a closed connected internally chain transitive set associated with
$\dot{x}(t) \in h(x(t))$.
More generally, in lieu of
Remark~\ref{a4'} it is sufficient that the $DI$ has some global attractor, not
necessarily inside the unit ball, since $(A6)'$ will then hold. 
This in turn implies stability.
\paragraph{}
In case of the original Borkar-Meyn assumptions,
$(BM1)(i),(ii)$ (see Section~\ref{GenBorkarMeynSec}) 
needed to be checked even before we could define $h_\infty$ while
in our case we do not need any extra assumptions to define $h_\infty$. As explained
before, constructing a global Lyapunov function for $h_\infty$ is one of many sufficient
conditions that guarantee $(A4)'/(A6)'$. In case of Lyapunov-type conditions for
stability, additional properties of the constructed global Lyapunov function need to
be verified before we get stability, see \cite{Andrieu} for more details. 
However, to the best of our knowledge, there are no Lyapunov-type conditions
that guarantee stability of stochastic approximation algorithms with set-valued
mean fields ($SRI$), the class of algorithms dealt with in this paper. 
Hence our assumptions are
general and relatively easy to verify.
 \section{Conclusions}
 \paragraph{}
 An extension was presented to the theorem of \textit{Borkar and Meyn} to
 include approximation algorithms with set-valued mean fields. Two different sets of
 sufficient conditions were presented that guarantee the `stability and convergence' 
 of stochastic
 recursive inclusions. 
 As a consequence of Theorems \ref{main} \& \ref{AccuMain}, the original
 Borkar-Meyn theorem is shown to hold under weaker requirements.
 Further, as a consequence of Theorem~\ref{AccuMain}, we obtained a solution to the 
 ``approximate drift'' problem. 
 Prior to this paper, there was no proof of stability of 
 stochastic gradient descent algorithms that use constant-error gradient estimators.
 Hence we could only conclude that the iterates converge to a small neighborhood, say $\overline{N}$, of
 the minimum set with very high probability. In Section~\ref{sgdsec}
 we used our framework to show the stability of the aforementioned algorithm which in turn
 allowed us to conclude an almost sure convergence to $\overline{N}$.
 \paragraph{}
An important future direction would be to extend these results to the case when 
the set-valued drift is governed by a Markov process in addition to the iterate sequence. For
the case of stochastic approximations, such a situation has been considered in
[\cite{BorkarBook}, Chapter 6], where the Markov `noise' is tackled using the 
`natural timescale averaging' properties of stochastic approximation. Finally, 
it would be interesting
to develop Lyapunov-type assumptions for stability of stochastic algorithms with set-valued mean fields.
\section{Acknowledgments}
This work was partly supported by Robert Bosch Centre for Cyber-Physical Systems, IISc.
\bibliographystyle{plain} 
\bibliography{MOR-template} 

\end{document}